\newcommand{\orcid}[1]{\href{https://orcid.org/#1}{\textcolor[HTML]{A6CE39}{\faOrcid}}}
\DeclarePairedDelimiterX\Basics[1](){ #1}
\DeclareMathAlphabet{\mathpzc}{OT1}{pzc}{m}{it}
\DeclarePairedDelimiter\bceil{\big\lceil}{\big\rceil}
\theoremstyle{remark}
\newtheorem{definition}{Definition}
\newtheorem{example}{Example}
\newtheorem{proposition}{Proposition}
\newtheorem{lemma}{Lemma}
\newtheorem{remark}{Remark}
\newcommand{\alabel}[1]{\stepcounter{equation}\tag{\theequation}\label{#1}}
\newcommand{\matr}[1]{\mathbf{#1}}
\newcommand{\matrA}{\matr{A}}
\newcommand{\matrQ}{\matr{Q}}
\newcommand{\tmatrQ}{\tilde{\matr{Q}}}
\newcommand{\cmatrQ}{\matr{\check Q}}
\newcommand{\vect}[1]{\mathbf{#1}}
\newcommand{\vS}{\vect{S}}
\newcommand{\vX}{\vect{X}}
\newcommand{\vx}{\vect{x}}
\newcommand{\cvx}{\vect{\check x}}
\newcommand{\vY}{\vect{Y}}
\newcommand{\vy}{\vect{y}}
\newcommand{\cvy}{\vect{\check y}}
\newcommand{\vZ}{\vect{Z}}
\newcommand{\vz}{\vect{z}}
\newcommand{\cvz}{\vect{\check z}}
\newcommand{\Es}{E_{\mathrm{s}}}
\newcommand{\vgamma}{\boldsymbol{\gamma}}
\newcommand{\cgamma}{\boldsymbol{\check \gamma}}
\newcommand{\set}[1]{\mathcal{#1}}
\newcommand{\setB}{\set{B}}
\newcommand{\setM}{\set{M}}
\newcommand{\setS}{\set{S}}
\newcommand{\setQ}{\set{Q}}
\newcommand{\setX}{\set{X}}
\newcommand{\setY}{\set{Y}}
\newcommand{\setZ}{\set{Z}}
\newcommand{\setSleft}[1]{\overleftarrow{\setS}_{\! #1}}
\newcommand{\setSright}[1]{\overrightarrow{\setS}_{\! #1}}
\newcommand{\tp}{\tilde{p}}
\newcommand{\hQ}{\hat{Q}}
\newcommand{\tQ}{\tilde{Q}}
\newcommand{\tT}{\tilde{T}}
\newcommand{\tmu}{\tilde{\mu}}
\newcommand{\hmu}{\hat{\mu}}
\newcommand{\ttheta}{\tilde{\theta}}
\newcommand{\hp}{\hat{p}}
\newcommand{\chp}{\check{\hat{p}}}
\newcommand{\chQ}{\check{\hat{Q}}}
\newcommand{\gB}{g^{\userB}}
\newcommand{\gE}{g^{\userE}}
\newcommand{\tgE}{\tilde g^{\userE}}
\newcommand{\gBt}{g^{\userB}_t}
\newcommand{\gEt}{g^{\userE}_t}
\newcommand{\mB}{m_{\userB}}
\newcommand{\mE}{m_{\userE}}
\newcommand{\sigmaB}{\sigma_{\userB}}
\newcommand{\sigmaE}{\sigma_{\userE}}
\newcommand{\NBidx}[1]{N^{\userB}_{#1}}
\newcommand{\NEidx}[1]{N^{\userE}_{#1}}
\newcommand{\Rs}{R_{\mathrm{s}}}
\newcommand{\cRs}{\check R_{\mathrm{s}}}
\newcommand{\bpsi}{\bar{\psi}}
\newcommand{\deltaQ}{\delta Q}
\newcommand{\deltamu}{\delta \mu}
\newcommand{\rdiff}[1]{\frac{\operatorname{d}}%
	{\operatorname{d} \hskip-0.035cm #1}}
\newcommand{\userB}{\mathrm{B}}
\newcommand{\userE}{\mathrm{E}}
\newcommand{\TB}{T^{\userB}}
\newcommand{\TE}{T^{\userE}}
\newcommand{\cTB}{\check T^{\userB}}
\newcommand{\cTE}{\check T^{\userE}}
\newcommand{\tTB}{\tT^{\userB}}
\newcommand{\tTE}{\tT^{\userE}}
\newcommand{\SNRB}{\mathrm{SNR}^{\userB}}
\newcommand{\SNRE}{\mathrm{SNR}^{\userE}}
\newcommand{\SNRdBB}{\mathrm{SNR}^{\userB}_{\textrm{dB}}}
\newcommand{\SNRdBE}{\mathrm{SNR}^{\userE}_{\textrm{dB}}}
\newcommand{\defeq}{\triangleq}
\newcommand{\Z}{\mathbb{Z}}
\newcommand{\R}{\mathbb{R}}
\newcommand{\C}{\mathbb{C}}
\newcommand{\plimsup}[1]{\operatorname{p-}\!\limsup_{#1}}
\newcommand{\pliminf}[1]{\operatorname{p-}\!\liminf_{#1}}
\newcommand{\Prob}{\operatorname{Pr}}
\newcommand{\defend}{\mbox{}\hfill$\square$}
\newcommand{\exampleend}{\mbox{}\hfill$\square$}
\newcommand{\remarkend}{\mbox{}\hfill$\square$}
\definecolor{dukeblue}{rgb}{0.0, 0.0, 0.61}
\definecolor{harvardcrimson}{rgb}{0.79, 0.0, 0.09}
\newcommand{\setMn}{\setM}
\newcommand{\setMnsq}{\setM^2}
\newcommand{\Mn}{M}
\newcommand{\mn}{m}
\newcommand{\tc}{t_{\mathrm{c}}}
\newcommand{\bigformulatop}[3]{%
	\begin{figure*}[!t]
		\normalsize
		\setcounter{equation}{#1}
		#3
		
		\setcounter{equation}{#2}
		\hrulefill
		\vspace*{4pt}
	\end{figure*}
}
\newcommand{\bigformulabottom}[3]{%
	\begin{figure*}[!b]
		\normalsize
		\vspace*{4pt}
		\hrulefill
		
		\setcounter{equation}{#1}
		#3
		
		\setcounter{equation}{#2}
	\end{figure*}
}
\begin{document}


\title{Constrained Secrecy Capacity of Finite-Input
	Intersymbol Interference Wiretap Channels}


\author{Aria~Nouri\textsuperscript{\,\orcid{0000-0001-5548-184X}},~\IEEEmembership{Graduate Student Member,~IEEE,}
	Reza~Asvadi\textsuperscript{\,\orcid{0000-0001-9898-7744}},~\IEEEmembership{Senior Member,~IEEE,} \\
	Jun~Chen\textsuperscript{\,\orcid{0000-0002-8084-9332}},~\IEEEmembership{Senior Member,~IEEE,}
	and Pascal~O.~Vontobel\textsuperscript{\,\orcid{0000-0002-6180-258X}},~\IEEEmembership{Fellow,~IEEE}
	\thanks{A.\ Nouri and R.\ Asvadi are with 
		the Department of Telecommunications,
		Faculty of Electrical Engineering,
		Shahid Beheshti University, Tehran 19839 69411,
		Iran
		(e-mails: ariya@ieee.org; r\_asvadi@sbu.ac.ir).
		R.\ Asvadi is the corresponding author.}%
	\thanks{J.\ Chen is with the 
		Department of Electrical and Computer Engineering,
		McMaster University, Hamilton, ON L8S 4K1, Canada
		(e-mail: chenjun@mcmaster.ca).}%
	\thanks{P.~O.\ Vontobel is with the 
		Department of Information Engineering and 
		the Institute of Theoretical Computer Science and Communications,
		The Chinese University of Hong Kong, Hong Kong SAR
		(e-mail: pascal.vontobel@ieee.org).}%
	\thanks{This paper was presented in part at the
		IEEE Information Theory Workshop, Kanazawa, Japan, 
		Oct. 2021~\cite{Nouri:Asvadi:Chen:Vontobel:21:1}.}%
}

\markboth{IEEE Transactions on Communications}%
{Nouri \MakeLowercase{\textit{et al.}}: Constrained Secrecy Capacity of Finite-Input
	Intersymbol Interference Wiretap Channels}


\maketitle

\begin{abstract}
	We consider reliable and secure communication
	over intersymbol interference wiretap channels (ISI-WTCs). In
	particular, we first derive an achievable secure rate for ISI-WTCs without
	imposing any constraints on the input distribution. Afterwards, we focus
	on the setup where the input distribution of the ISI-WTC is constrained to
	be a time-invariant finite-order Markov chain. Optimizing the parameters
	of this Markov chain toward maximizing the achievable secure rates is
	a computationally intractable problem in general, and so, toward finding a
	local maximum, we propose an iterative algorithm that at every iteration
	replaces the secure rate function with a suitable~surrogate function whose
	maximum can be found efficiently.  Although the secure rates achieved in the
	unconstrained setup are potentially larger than the secure rates achieved in
	the constrained setup, the latter setup has the advantage of leading to
	efficient algorithms for estimating and optimizing the achievable secure rates, and also has the benefit of being the basis of efficient coding schemes.
\end{abstract}

\begin{IEEEkeywords}
	Intersymbol interference (ISI),
	intersymbol interference wiretap channel (ISI-WTC),
	finite-state machine channel (FSMC),
	Markov source,
	secure rate,
	expectation-maximization (EM).
\end{IEEEkeywords}

\section{Introduction}
\label{sec::Intro}


\subsection{Motivation}


\IEEEPARstart{T}{he} increasing number of connected users and the broadcasting
nature of the wireless medium lead to a flurry of security challenges for
wireless communication applications. For example, typical cryptographic
protocols require significant communication resources for distributing and
maintaining secret keys. This issue noticeably decreases the data transmission
efficiency as the number of users gradually increases~\cite{7539590}. In
addition, traditional cryptosystems rely on the assumption that eavesdroppers
have limited computational power, making them vulnerable against more and more
powerful (quantum) computers~\cite{Gidney_2021}. Alternatively,
information-theoretic secrecy~\cite{9380147} utilizes the inherent randomness
of communication channels to achieve security
at the physical layer~\cite{8509094} without requiring secret key agreement
and without imposing any constraints on the eavesdroppers' computational
power.


The emergence of different wireless applications gives rise to diverse channel
models for various channel conditions.
Intersymbol interference (ISI)
channels are used as a model for high-data-rate transmission over wireless channels
when the delay spread of the channel exceeds the symbol duration~\cite[Ch. 9]{Proakis2008}.
In order to be specific, consider a
multipath fading channel
\begin{equation*}
	Y(\tc) \defeq \sum_{\ell=0}^{m_{\mathrm{c}}} g_{\mathrm{c},\ell}(\tc) X(\tc - \tau_\ell) + N(\tc),
\end{equation*}
with continuous-time variable $\tc \in \R$, where $X(\tc)$, $Y(\tc)$, and $N(\tc)$
denote the input, the output, and the additive noise signal, and where
$g_{\mathrm{c},\ell}(\tc)$ and $\tau_\ell$ are, respectively, the gain and the delay of the
$\ell$-th path, $0 \leq \ell \leq m_{\mathrm{c}}$. When the symbol duration is smaller than $\tau_\ell-\tau_0$, the sampled output of a
filter matched to the shaping pulse at the receiver leads to the ISI phenomenon. Such an ISI model usually appears in
single-carrier communication systems, which require a higher power efficiency
and a better peak-to-average power ratio (compared with multicarrier
communication systems) and which appear in applications of the narrowband
internet of things (NB-IoT)\footnote{In typical applications of the NB-IoT, ISI is mitigated by appending a sufficiently large cyclic~prefix to each transmitted block~\cite{7931557}. This method decreases the effective throughput as the delay spread of the channel increases.}~\cite{8698792} as outlined in specifications of 5G
and beyond-5G networks~\cite{3GPPTS36211,ericssonwpp:IoT}. Note that ISI is
also caused by multipath propagation in long-range underwater acoustic
communications~\cite{9246302}, as well as in high data-rate ultra-wideband
	communication systems~\cite{1569979}.



Providing security at the physical layer of the above-mentioned communication
technologies without imposing extra delay, power consumption, and processing
burden, has received significant attention recently
\cite{8856252,8715341,8428404}. In this paper, we mostly use scenarios from the NB-IoT
	technology for our examples and simulations. It is worthwhile to note that the NB-IoT
mostly inherits the long-term evolution (LTE)
infrastructure~\cite{3GPPTS36211}, so the essential channels operate in
the licensed sub-GHz spectrum range~\cite{3GPPTS36101}. In this spectrum
range, in contrast to the broadcasting applications in the THz carrier
frequency range~\cite{Ma-2018}, one cannot choose a sufficiently narrow angular divergence for
	the transmitter beam  toward preventing the
	eavesdropper from intercepting non-line-of-sight transmission signals.
This issue presents a vulnerable environment at the physical layer of
applications using the NB-IoT.

These considerations motivate us to study theoretical aspects of the
physical layer security over ISI wiretap channels (ISI-WTCs).
As~depicted~in~Fig.~\ref{FIG:BLK_diag}, the ISI-WTC comprises
two ISI channels, where the primary channel connects a transmitter (called
Alice) to a legitimate receiver (called Bob and abbreviated by ``$\userB$''),
while the secondary channel connects the transmitter to an eavesdropper
(called Eve and abbreviated by ``$\userE$''). In order to focus on the key
aspects of this setup, the channel gains are assumed to be constant and
perfectly known to the receiver over each transmission block.\footnote{These
	assumptions are well established in slowly-varying channels and appear also
	in other studies of ISI channels (see, e.g.,~\cite{8861076}).}


\begin{figure}
	\centering
	\includegraphics[scale=0.8]{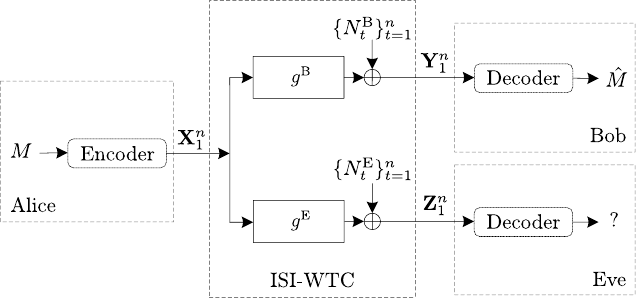}
	\caption{Block diagram of the ISI-WTC. \\ \mbox{}}
	\label{FIG:BLK_diag}
\end{figure}


\subsection{Background, Related Works, and Contributions}


ISI channels with finite memory length and finite input alphabets are a
particular case of finite-state machine channels
(FSMCs)~\cite{Gallager:1968:ITR:578869}. Toward maximizing the achievable
information rates over FSMCs, the classical Blahut-Arimoto algorithm
(BAA)~\cite{1054855,1054753} was generalized in~\cite{4494705} to optimize
finite-state machine sources (FSMSs) at the input of FSMCs. Comparing lower
bounds on the capacity of FSMCs (i.e., the maximized information
rates)~\cite{965977,4494705} with the corresponding upper
bounds~\cite{1397923,955166} typically shows a small gap between them,
which can be further narrowed by increasing the memory order of the
FSMS at the~input~\cite{4455735}.


Recently, Han and Sasaki~\cite{8747418,9483917} derived the secrecy capacity
of memoryless wiretap channels with channel state
information at the encoder. Dai \emph{et~al}.~\cite{8963770} applied these
results to physically degraded Gaussian wiretap channels with noiseless
private feedback from Bob's observations to the encoder. It was shown
in~\cite{8963770} that the considered feedback enhances the secrecy capacity
under the weak secrecy criterion. The delayed version of this feedback is
employed in~\cite{7774989} to enlarge the rate-equivocation region of
finite-state Markov wiretap channels.\footnote{A finite-state Markov wiretap
channel, as in~\cite{7774989}, is a wiretap channel where Bob's channel and
Eve's channel are FSMCs where the (joint) state process is assumed to be a
stationary ergodic Markov chain independent of the transmitted message.}
Besides employing the feedback channel, the
efficiency of secure communication over ISI channels can be enhanced
by injecting cooperative artificial noise toward degrading Eve's channel while
minimizing the impact on Bob's channel, as done in~\cite{8464866} and \cite{8737786}.


Due to power efficiency requirements, artificial-noise-aided communication
has not received too much attention in recent technologies.  Also,
establishing a private noiseless channel to feed back the complete output of
Bob's channel to Alice's encoder imposes a tremendous delay and processing
overload on the higher layers of large cooperative networks. Hence, we
focus on the standard version of ISI-WTCs (with neither feedback nor
additional artificial noise), as it requires few assumptions and consequently
is more practically relevant.


In terms of the main focus of this paper, estimating the secrecy capacity of a
finite-state wiretap channel was already considered
in~\cite{5351376}.\footnote{Note that in~\cite{5351376} a finite-state wiretap
	channel is defined to be a wiretap channel where Bob and Eve observe the
	input source through two distinct FSMCs.} However, the channel setup and the
approach to estimate the secrecy capacity in~\cite{5351376} have the following
limitations. Firstly, the assumptions for the channel setup
in~\cite{5351376} resemble the general assumptions for memoryless
wiretap channels as in~\cite{1055892}, where Eve's channel is assumed to be
noisier than Bob's channel. However, as we will show, these assumptions are
inadequate for ISI channels (and more generally, for FSMCs) due to the
non-flat frequency responses of these channels. Secondly, the gradient of the
	function that is used for approximating the secure rate function is usually not
	the same as the gradient of the secure rate function
	at a given operating point. This issue leads to an inaccurate search direction
and eventually makes the algorithm unstable.


In the following, we highlight the main contributions and results presented in
this paper.
\begin{itemize}
	
	\item 
	In the first step, we derive the achievable secure rates without imposing any
	constraints on the input distribution. We then focus on the setup where
		the input distribution is a time-invariant finite-order Markov chain,
		henceforth called an input Markov source.\footnote{Throughout the
			paper, when we talk about a Markov source at the input of the channel,
			we refer to the Markov source that models the statistics of the
			codebook~\cite{Nouri:Asvadi:22:ISIT}. It should not be confused with the
			source generating the data that we want to transmit reliably and
			securely.
		} Note that employing Markov sources at the input of the ISI
	channels has the benefit of leading to efficient algorithms for
	estimation~\cite{1661831} and maximization~\cite{4494705} of information
	rates, approaching the capacity in point-to-point setups~\cite{4455735}, and
	being a basis for efficient encoding and decoding
	schemes~\cite{1397934,Nouri:Asvadi:22:ISIT}.  Accordingly, we propose an
	efficient algorithm for optimizing the parameters of an input Markov
	source toward maximizing the obtained achievable secure rates over
	ISI-WTCs.
	
	\item Maximizing the above-mentioned secure rate is challenging because it is
	not a closed-form function of the input distribution and its
	evaluation is only possible through Monte-Carlo simulations. The key idea
	behind the proposed algorithm is to iteratively approximate the
		zeroth-order and the first-order behavior of the secure rate function by
		suitable surrogate functions that are well-defined and can relatively
		\text{easily be maximized}.
	
	\item We provide examples where the capacity of Eve's channel is higher than
	the capacity of Bob's channel, yet a nonzero secure rate is possible.  These
	examples show that it is feasible to optimize an input Markov source
	such that spectral discrepancies between the frequency responses of Bob's
	and Eve's channels can be exploited---without any further power consumption
	for transmitting interfering artificial noise toward jamming
	Eve's~channel.
\end{itemize}


\subsection{Paper Organization}

The remainder of this paper is organized as follows. Section~\ref{sec::Pre}
introduces the system model and some preliminary concepts related to FSMCs,
ISI-WTCs, and achievable secure rates. Section~\ref{sec::optim} describes
the proposed algorithm for optimizing the parameters of a Markov source at the input of an ISI-WTC and
analyzes it in detail. Section~\ref{sec::simul} contains some numerical results and
discussions. Finally, Section~\ref{sec::conclu} draws the conclusions.


\subsection{Notation}


The sets of integers and complex numbers are denoted by $\Z$ and $\C$,
respectively. The ring of polynomials with coefficients in $\C$ and
indeterminate $D$ is denoted by $\C[D]$, where ``$D$'' stands for ``delay''. Other than
that, sets are denoted by calligraphic letters, e.g.,
$\setS$. The Cartesian product of two sets $\setX$ and $\setY$ is written as
$\setX \times \setY$, and the $n$-fold Cartesian product of $\setX$ with
itself is written as $\setX^n$.  If $\setX$ is a finite set, then its
cardinality is denoted~by~$|\setX|$.


Random variables are denoted by upper-case italic letters, e.g., $X$, their
realizations by the corresponding lower-case letters, e.g., $x$, and the set
of possible values by the corresponding calligraphic letter, e.g.,
$\set{X}$. Random vectors are denoted by upper-case boldface letters, e.g.,
$\vX$, and their realizations by the corresponding lower-case letters, e.g.,
$\vx$. For integers $n_1$ and $n_2$ satisfying $n_1 < n_2$, the notation
$\vX_{n_1}^{n_2} \defeq (X_{n_1},X_{n_1+1},\ldots,X_{n_2}) $ is used for a
time-indexed vector of random variables and
$\vx_{n_1}^{n_2}\defeq(x_{n_1},x_{n_1+1},\ldots,x_{n_2})$ for its realization.
Boldface letters are also used for matrices, e.g., $\matrA$, with the
$(i,j)$-entry of $\matrA$ being called $A_{ij}$.




%


For any real number $x$, the expression $(x)^+$ stands for $\max\{x,0\}$;
similarly, $f^+(\cdot)$ stands for $\big(f(\cdot)\big)^+$. Moreover, the
expression $\log(\,\cdot\,)$ denotes the natural logarithm function.


The entropy of a random variable $X$,  the mutual information between
two random variables $X$ and $Y$, and the mutual information between
two random variables $X$ and $Y$ conditioned on the random variable
$Z$ are denoted by $H(X)$, $I(X;Y)$, and $I(X;Y|Z)$, respectively.
Finally, the variational distance between the probability mass functions
(PMFs) of two random variables $X$ and $Y$ over the same finite alphabet
$\setX$ is defined as 
$$
d_{\setX}(p_X,p_Y) \defeq \sum\limits_{x \in\setX} \ \bigl|
p_X(x) - p_Y(x) \bigr|.
$$


\section{Preliminaries}
\label{sec::Pre}

\subsection{Channel Model}

An ISI channel with transfer polynomial
$g(D) \defeq \sum_{t=0}^{m} g_t D^t \in \C[D]$, where $g_m\neq 0$ and where
$m$ is called the memory length, has an input process
$\{ X_t \}_{t \in \Z}$, a noiseless output process $\{ U_t \}_{t \in \Z}$, a noise process
$\{ N_t \}_{t \in \Z}$, and a noisy output process $\{ Y_t \}_{t \in \Z}$
with
\begin{alignat*}{2}
    U_t
      &\defeq
         \sum_{\ell=0}^{m} g_{\ell} X_{t-\ell},
           &&\quad t \in \Z, \\
    Y_t
      &\defeq
         U_t + N_t,
           &&\quad t \in \Z,
\end{alignat*}
where $X_t, U_t, {N_t,\,} Y_t \in \C$ for all $t \in \Z$, and where
the noise process is independent of the channel input process. In the
following, we will assume that the noise process is white Gaussian noise,
i.e., $\{ N_t \}_{t \in \Z}$ are i.i.d.\ Gaussian random variables with mean
zero and variance~$\sigma^2$. Clearly, an ISI channel is parameterized by
the couple $\bigl( g(D), \sigma^2 \bigr)$.

An ISI channel described by
$\bigl( g(D) \defeq \sum_{t=0}^{m} g_t D^t, \ \sigma^2 \bigr)$, where
$m < \infty$, and having an input process $\{ X_t \}_{t \in \Z}$ taking values
in a finite set $\setX \subsetneq \C$ is a special case of the channels
	in the class of finite-state machine channels (FSMCs), which were called
finite-state channels in~\cite{Gallager:1968:ITR:578869}. Indeed,~let 
$s_t \defeq \vx_{t-\nu+1}^t$ (with $\setS \defeq \setX^{\nu}$ and $\nu\geq m$) denote the state of an
FSMC modeling an ISI channel at $t\in\Z$. Then
\begin{align*}  
	&
	p_{S_t, Y_t | S_{t-1}, X_t}(s_t, y_t | s_{t-1}, x_t)\\
	&\quad\quad
	= p_{S_t | S_{t-1}, X_t}(s_t | s_{t-1}, x_t)
	\cdot
	p_{Y_t | S_{t-1}, X_t}(y_t | s_{t-1}, x_t),
\end{align*}
where
\begin{align*}
	p_{S_t | S_{t-1}, X_t}(s_t | s_{t-1}, x_t)
	&\defeq
	\begin{cases}
		\begin{array}{l}
			\hspace*{-6pt}1
			\mbox{}
		\end{array}
		& \hspace*{-15pt}
		\begin{array}{l}
			\text{(if $s_t = \vx_{t-\nu+1}^t, s_{t-1} = \vx_{t-\nu}^{t-1})$}
		\end{array} \\\\
		\begin{array}{l}
			\hspace*{-6pt}0
		\end{array}
		& \hspace*{-15pt}
		\begin{array}{l}
			\text{(otherwise)}
		\end{array}
	\end{cases}\hspace{-11pt}, \\[3pt]
	p_{Y_t | S_{t-1}, X_t}(y_t | s_{t-1}, x_t)
	&\defeq
	\frac{1}{{2 \pi \sigma^2}}
	\cdot
	\exp
	\left( 
	- \frac{| y_t - u_t|^2}{2\sigma^2}
	\right),
\end{align*}
with $s_{t-1} = \vx_{t-\nu}^{t-1} (\in \setX^\nu)$ and $u_t = \sum_{\ell=0}^{m} g_{\ell} x_{t-\ell}$.

All possible state sequences of an ISI channel (and more generally, of an
FSMC) can be represented by a trellis diagram. Because of the assumed
time invariance, it is sufficient to show a single trellis section. For
example, Fig.~\ref{Fig.trellis}($a$) shows a trellis section of an ISI channel
characterized by the couple $\bigl( g(D) \defeq 1-D, \ \sigma^2 \bigr)$
with $\nu=1$ and input alphabet $\set{X} \defeq \{ +1, -1 \}$. In
this diagram, branches start at state $s_{t-1} \,\defeq\, x_{t-1}$, end at state
$s_t \,\defeq\,\, x_t$, and have noiseless channel output symbol $u_t = x_t - x_{t-1}$
shown next to them.

\begin{figure*}
	\centering
	\includegraphics[scale=0.70]{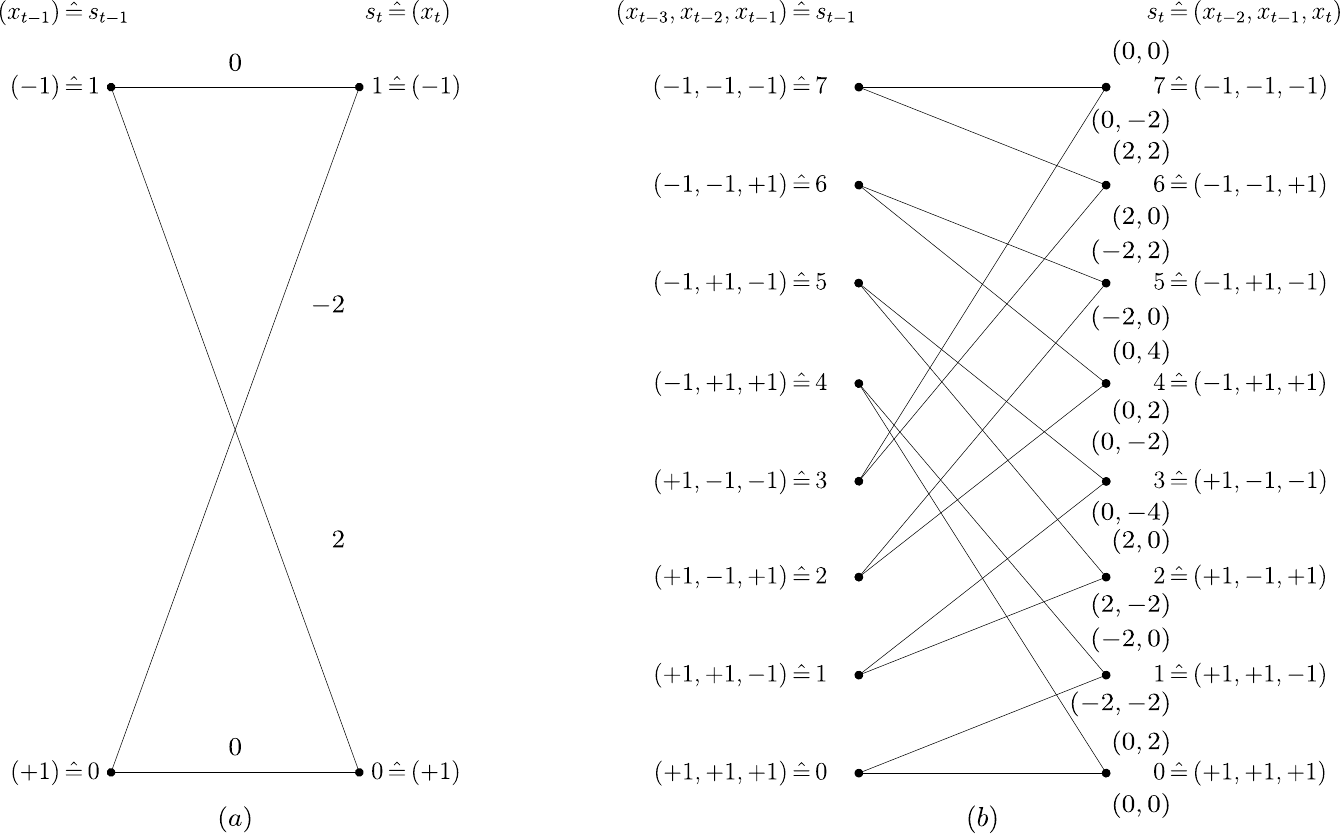}
	\caption{%
		($a$)~Trellis section of an FSMC, modeling an ISI channel with
			$g(D) = 1 - D$, when used with $\nu=1$ and
		$\setX  =  \{ +1, -1 \}$. The noiseless channel output symbol $u_t$ is shown
		next to the branches. %
		($b$)~Trellis section of an FSMC, modeling an ISI-WTC with $\gB(D) = 1 - D$ and
			$\gE(D) = 1 + D - D^2 - D^3$, when used with $\nu=3$ and
		$\setX  =  \{ +1, -1 \}$. Noiseless channel output symbols
		$(u_t,v_t)$, one noiseless channel
		output symbol for Bob's channel and one noiseless channel output symbol for
		Eve's channel, are shown next to the branches.\\[-16pt]  \mbox{}}
	\label{Fig.trellis}
\end{figure*}

Let $\setB$ $(\subseteq \setS \times \setS)$ denote the set of all valid consecutive state pairs
$(s_{t-1}, s_t)$ for which $p_{S_t | S_{t-1}}(s_t | s_{t-1})$ is allowed to be non-zero for any $t \in \Z$.
	Moreover, let
	\begin{align*}
	  \setSright{i} 
	     \defeq
	       \bigl\{ 
	         j 
	       \bigm| 
	         (i,j) \in \setB 
	       \bigr\}, \quad
	  \setSleft{j} 
	     \defeq
	       \bigl\{
	         i 
	       \bigm|
	        (i,j) \in \setB
	       \bigr\},
	\end{align*} 
	 be the set of states $S_t\in\setS$ reachable from $S_{t-1} = i$ and the 
	 set of states $S_{t-1}\in\setS$ that can reach $S_t = j$, respectively.
	For every $(i,j) \in \setB$, let $p_{ij}\defeq p_{S_t|S_{t-1}}(j|i)$ be the
	time-invariant state transition probability assigned by an ergodic and
	non-periodic Markov source of memory order $\nu$.
	Then there is a unique stationary state PMF
	$\{\mu_i\}_{i\in\setS}$ such that $p_{S_t}(i) = \mu_i$ for all $t \in \Z$,
	$i \in \setS$. Finally, let $Q_{ij} \defeq \mu_i \cdot p_{ij}$, for all
	$(i,j) \in \setB$.


In the above statements, we started with $\{ p_{ij} \}_{(i,j) \in \setB}$ and
derived $\{ \mu_i \}_{i \in \setS}$ and $\{ Q_{ij} \}_{(i,j) \in \setB}$ from
$\{ p_{ij} \}_{(i,j) \in \setB}$. However, for analytical purposes, it turns
out to be beneficial to start with $\{ Q_{ij} \}_{(i,j) \in \setB}$ and derive
$\{ p_{ij} \}_{(i,j) \in \setB}$ and $\{ \mu_i \}_{i \in \setS}$ from
$\{ Q_{ij} \}_{(i,j) \in \setB}$. Note that the set of all valid
$\{ Q_{ij} \}_{(i,j) \in \setB}$ for a fixed set $\set{B}$ is given by the
polytope $\setQ(\setB)$, where
\begin{align*}
  \setQ(\setB)
    &\defeq
       \left\{
         \{ Q_{ij} \}_{(i,j) \in \setB} \ 
       \middle|
         \begin{array}{r@{\ }c@{\ }l}
           Q_{ij} 
             &\geq& 
               0, 
                 \ \forall(i,j) \in \setB, \\
           \sum\limits_{(i,j) \in \setB}Q_{ij}
             &=& 
               1, \\
           \sum\limits_{j\in\setSright{i}}
             Q_{ij} 
             &=& 
               \sum\limits_{k \in \setSleft{i}}
                 Q_{ki}, 
                   \ \forall i \in \setS \!
               \end{array}
           \!\!\right\}\!.
\end{align*}
(See~\cite{4494705} for similar observations.) In the following, we will use
the short-hand notation $\matrQ$ for $\{ Q_{ij} \}_{(i,j) \in
	\setB}$. Moreover, similar to \cite[Assumption~34]{4494705}, we will only be
interested in sets $\setB$ where the Markov sources corresponding to
relative interior points of $\setQ(\setB)$ are ergodic and
non-periodic.

\begin{remark}[\emph{Parameterized family of $\matrQ$}]
	\label{remark:Q:parameterized:family:1}
	
	Frequently, we will consider the setup where
	$\matrQ$ is a function of some parameter~$\theta$. More precisely, for every
	$(i,j) \in \setB$, we let $Q_{ij}(\theta)$ be a smooth function of the
	parameter~$\theta$, where $\theta$ varies over a suitable range. For every~$\theta$, we require
		that $\matrQ(\theta) \defeq \bigl\{ Q_{ij}(\theta) \bigr\}_{(i,j) \in \setB} \in
		\setQ(\setB)$. Moreover, for every $(i,j) \in \setB$, we denote the
	derivative of $Q_{ij}(\theta)$ w.r.t. $\theta$ and evaluated at $\ttheta$ by
	$Q^{\theta}_{ij}(\ttheta)$. We denote the corresponding steady-state and
	state transition probabilities parameterized by~$\theta$ by $\mu_i(\theta)$
	and $p_{ij}(\theta)$, respectively. Similarly, we denote their derivatives
	w.r.t. $\theta$ and evaluated at $\ttheta$ by $\mu_i^{\theta}(\ttheta)$ and
	$p_{ij}^{\theta}(\ttheta)$, respectively. Because
	$\matrQ(\theta) \in \setQ(\setB)$, we have
	$\sum_{(i,j)\in\setB} Q^{\theta}_{ij}(\ttheta) = 0,$ and $
	\sum_{i\in{\setS}} \mu_i^{\theta}(\ttheta) = 0$.
	\remarkend
\end{remark}

\begin{definition}[\textit{{Intersymbol Interference Wiretap Channel (ISI-WTC)}}]
	\label{def:prwt:channel:1}
	
	In this paper, we consider an ISI-WTC, where Alice transmits data symbols over Bob's channel and over
	Eve's channel, which are both assumed to be ISI channels with finite input
	alphabet $\set{X} \subsetneq \C$. (See Fig.~\ref{FIG:BLK_diag}.) Specifically, Bob's channel is an ISI channel described by the couple $\bigl( \gB(D), \sigmaB^2 \bigr)$,
	with transfer polynomial $\gB(D) = \sum_{t=0}^{\mB} \gBt D^t$, noiseless
	output process $\{ U_t \}_{t \in \Z}$, noise process
	$\{ \NBidx{t} \}_{t \in \Z}$, and noisy output process
	$\{ Y_t \}_{t \in \Z}$. Similarly, Eve's channel is an ISI channel
	described by the couple $\bigl( \gE(D), \sigmaE^2 \bigr)$, with transfer
	polynomial $\gE(D) = \sum_{t=0}^{\mE} \gEt D^t$, noiseless output process
	$\{ V_t \}_{t \in \Z}$, noise process $\{ \NEidx{t} \}_{t \in \Z}$, and
	noisy output process $\{ Z_t \}_{t \in \Z}$. We assume that the noise
	process of Bob's channel and the noise process of Eve's channel are
	independent. Clearly, the ISI-WTC is parameterized by the quadruple
	$\bigl( \gB(D), \gE(D), \sigmaB^2, \sigmaE^2 \bigr)$.   \defend
\end{definition}


By choosing $\nu\geq\max (\mB,\mE)$, FSMCs and their associated trellises
	can be used for visualizing ISI-WTCs as well. Since such trellis
	representations are well known, we omit the details and conclude this
	section with the following example.
	(See~\cite{Nouri:Asvadi:Chen:Vontobel:21:1} and \cite{5351376} for more
	details.)

Consider an ISI-WTC with $\setX = \{ +1, -1 \}$, where Bob's channel is described by
$\gB(D) = 1 - D$ (see Fig.~\ref{Fig.trellis}($a$)), and where Eve's channel
is described by $\gE(D) = 1 + D - D^2 - D^3$. Let $\nu = 3$.
Then all possible state sequences of an FSMC modeling this ISI-WTC can be represented by a trellis
diagram. Because of the assumed time invariance, it is sufficient to show a
single trellis section, as shown in Fig.~\ref{Fig.trellis}($b$) for the
present example.


\subsection{Secure Rate}
\label{sec::Pre.Ach}

\begin{definition}\label{def:code}
	An $\bigl( {e^{n\Rs}}, n \bigr)$ code for the wiretap channel consists
		of a message set $\setM$ with $|\setM| \defeq \bceil{e^{n\Rs}}$, a
		stochastic encoder $f:\setM\to\setX^n$, and a decoder
		$\phi:\C^n\to\setM$.\footnote{The message set $\setM$, the encoding
				function $f$, and the decoding function $\phi$ implicitly depend on
				the block length~$n$.} \defend 
\end{definition}

Let $\Mn$ be a random variable corresponding to a uniformly chosen secret message from the alphabet $\setMn$. 
	The reliability of Bob’s decoder is measured by the probability of a block error $\Pr\big(M\neq\phi(\vY^n)\big)$ and the secrecy performance of the code is measured by the statistical independence between $\Mn$ and $\vZ_1^n$ in terms of the variational distance
	$d_{\setMn \times \setZ^n}(p_{\Mn, \vZ_1^n}, p_{\Mn} p_{\vZ_1^n})$. (See Appendix~\ref{apndx.frst} for more details.)

\begin{definition}\label{def:achievability}
	A secure rate $\Rs$ is said to be achievable if there exists a sequence
		of codes~$\bigl( {e^{n\Rs}}, n \bigr)$ as in Definition~\ref{def:code},
		with $n \to \infty$, satisfying the reliability criterion
		\begin{align*}
			\Pr\big(M\neq\phi(\vY^n)\big)
			\rightarrow 0, \alabel{equ:relcrit}
		\end{align*}
		and the secrecy criterion
		\begin{align*}	
			d_{\setMn \times \setZ^n}(p_{\Mn, \vZ_1^n}, p_{\Mn} p_{\vZ_1^n})
			\rightarrow 0. \alabel{equ:seccrit}
		\end{align*}
		The secrecy capacity is the supremum of all achievable secure rates.  \defend
\end{definition}

\begin{definition}\label{def:SecRate}
	Consider an ISI-WTC with an input Markov source
		described by $\matrQ \in \setQ(\setB)$.	For all $(i,j) \in \setB$,
		define $\TB_{ij}(\matrQ)$ and $\TE_{ij}(\matrQ)$ as	shown in~\eqref{eq:def:TB:1}
		and~\eqref{eq:def:TE:1} at the bottom of this page.\footnote{The expressions $\cTB_{ij}(\matrQ,\vy_1^n)$ and
		$\cTE_{ij}(\matrQ,\vz_1^n)$ in~\eqref{eq:def:TB:1}
		and~\eqref{eq:def:TE:1} are similar to the expression for $\check{T}_{ij}^{(N)}$
		in~\cite[Lemma~70]{4494705}, part ``second possibility.''}
	\defend
\end{definition}

\bigformulabottom{2}{5}{%
	\begin{alignat}{3}
		\!\!\TB_{ij}(\matrQ)
		&\defeq
		\lim_{n\to\infty} \hspace{-0.05cm}
		\int
		p_{\vY_1^n}(\vy_1^n)
		\cdot
		\cTB_{ij}(\matrQ,\vy_1^n)
		\ \mathrm{d}{\vy_1^n},
		\ \
		&&\cTB_{ij}(\matrQ,\vy_1^n)
		&&\defeq
		\frac{1}{n}
		\sum_{t=1}^n
		\log
		\left(
		\frac{p_{S_{t-1}, S_t | \vY_1^n}(i, j | \vy_1^n)
			^{{p_{S_{t-1}, S_t | \vY_1^n}
					(i, j | \vy_1^n)}
				/ {\mu_i p_{ij}}}}
		{p_{S_{t-1} | \vY_1^n}
			(i | \vy_1^n)
			^{{p_{S_{t-1} | \vY_1^n}
					(i | \vy_1^n)}
				/ {\mu_i}}}
		\right)\!,\!
		\alabel{eq:def:TB:1} \\
		\!\!\TE_{ij}(\matrQ)
		&\defeq
		\lim_{n\to\infty} \hspace{-0.05cm}
		\int
		p_{\vZ_1^n}(\vz_1^n)
		\cdot
		\cTE_{ij}(\matrQ,\vz_1^n)
		\ \mathrm{d}{\vz_1^n},
		\ \ 
		&&\cTE_{ij}(\matrQ,\vz_1^n)
		&&\defeq
		\frac{1}{n}\sum_{t=1}^n
		\log
		\left(
		\frac{p_{S_{t-1}, S_t | \vZ_1^n}
			(i, j | \vz_1^n)
			^{{p_{S_{t-1}, S_t | \vZ_1^n}
					(i, j| \vz_1^n)}
				/ {\mu_i p_{ij}}}}
		{p_{S_{t-1} | \vZ_1^n}
			(i|\vz_1^n)
			^{{p_{S_{t-1} | \vZ_1^n}
					(i | \vz_1^n)}
				/ {\mu_i}}}
		\right)\!,\!
		\alabel{eq:def:TE:1} \\
		\bpsi_{\tmatrQ}(\matrQ)
		&\defeq
		\kappa'
		\cdot 
		\Bigg(
		\sum_{(i,j) \in \setB} 
		\tQ_{ij}
		\cdot
		\bigl(
		1 + \kappa \cdot (\deltaQ)_{ij} 
		\bigr)
		\ \cdot
		&&\hspace*{-2pt}\log
		\bigl(
		1 + &&\hspace*{-12pt}\kappa \cdot (\deltaQ)_{ij}
		\big)
		-
		\sum_{i \in \setS}
		\tmu_i
		\cdot
		\bigl( 1 + \kappa \cdot(\deltamu)_i\bigr)
		\cdot
		\log
		\bigl( 1+\kappa\cdot(\deltamu)_i \bigl)
		\Bigg). \alabel{eq:def:bpsi:1}
	\end{alignat}
}

\begin{proposition}\label{Prop:achiev}
	Consider an ISI-WTC with an input Markov source
		described by $\matrQ \in \setQ(\setB)$. Let
		\begin{equation}
			\label{equ.concap}
			\Rs(\matrQ)
				\defeq
				\bigg( \sum_{(i,j)\in\setB} Q_{ij} \cdot \bigl(
				\TB_{ij}(\matrQ) - \TE_{ij}(\matrQ) \bigr) \bigg)^{\!\! +}.
		\end{equation}
		Then all secure rates $\Rs$ satisfying
		\begin{align*}
			R_s < \Rs(\matrQ) 
		\end{align*}
		are achievable over this ISI-WTC under the reliability criterion~\eqref{equ:relcrit} and the secrecy criterion~\eqref{equ:seccrit}.
\end{proposition}

\begin{proof}
	See Appendix~\ref{apndx.scnd}.
\end{proof}

The expressions in~\eqref{eq:def:TB:1}
and \eqref{eq:def:TE:1} make it appear very unlikely that there is a closed-form expression
for $\Rs(\matrQ)$ in terms of $\matrQ$. Accordingly, the best one can do is to estimate
$\Rs(\matrQ)$ for a specific $\matrQ \in \setQ(\setB)$ through Monte-Carlo methods (e.g., variants of
the algorithms in~\cite{1661831}). 

We are now in a position to introduce the notion of constrained secrecy
capacity, which is a key quantity to be studied in the subsequent parts of
this paper.

\begin{definition}
	\label{def:constrained:capacity:1}
	
	Consider an ISI-WTC with an input Markov source
	described by $\matrQ$, varying over $\setQ(\setB)$. The constrained secrecy
	capacity (or, more precisely, the $\setQ(\setB)$-constrained secrecy
	capacity) is defined as
	\begin{align*}
	C_{\setQ(\setB)} \defeq \max_{\matrQ \in
		\setQ(\setB)} \Rs(\matrQ).
	\end{align*}
	\defend
\end{definition}

Roughly speaking, $C_{\setQ(\setB)}$ is the tightest lower bound on the secrecy
capacity of an ISI-WTC that can be obtained by optimizing an input Markov
source described by $\matrQ \in \setQ(\setB)$. 

The next section (Section~\ref{sec::optim}) discusses an efficient
algorithm for finding a local maximum of $\Rs(\matrQ)$ over
$\setQ(\setB)$. In a first reading of this paper, readers might want to skip
this (rather technical) section and go straight to Section~\ref{sec::simul}
that discusses some simulation results.


\section{Secure Rate Optimization}
\label{sec::optim}


A first challenge when optimizing the function $\Rs(\matrQ)$ over
	$\setQ(\setB)$ is the fact that we do not have a closed-form expression for
	$\Rs(\matrQ)$, i.e., the best we can do is to approximate $\Rs(\matrQ)$ by
	estimating it with the help of Monte-Carlo methods. However, instead of
	applying a standard zeroth-order optimization method that is based on
	estimates of $\Rs(\matrQ)$, in this section we pursue a more efficient
	approach that is based on estimates of the gradient of $\Rs(\matrQ)$.%
	\footnote{Note that in the following we ignore the, essentially irrelevant,
	$(\ldots)^+$ operator in \eqref{equ.concap} when approximating $\Rs(\matrQ)$
	and when optimizing $\Rs(\matrQ)$ over $\matrQ\in\setQ(\setB)$.}

A second challenge when optimizing the function $\Rs(\matrQ)$ over
	$\setQ(\setB)$ is the fact that $\Rs(\matrQ)$ is (typically, according to
	our numerical investigations in Section~\ref{sec::simul}) a fluctuating,
	non-concave function. Therefore, we will aim at finding a local maximum of
	$\Rs(\matrQ)$ instead of the global maximum. (Of course, by running the
	optimization algorithm with different initializations, one can potentially
	get different local maxima. Finally, the maximum among all local maxima is
	then selected. See also the discussion in Section~\ref{sec::simul}.)

Our iterative optimization method operates as follows:
\begin{itemize}
	
	\item Assume that at the current iteration the algorithm has found a Markov source
	described by $\tmatrQ \defeq \bigl\{ \tQ_{ij} \bigr\}_{(i,j) \in \setB}$.
	
	\item Around $\matrQ = \tmatrQ$, the algorithm approximates the secure rate function $\Rs(\matrQ)$
	by the surrogate function $\psi_{\tmatrQ}(\matrQ)$
	over $\setQ(\setB)$ satisfying the following properties:
	\begin{itemize}
		
		\item The value of $\psi_{\tmatrQ}(\matrQ)$ matches the value of
		$\Rs(\matrQ)$ at $\matrQ = \tmatrQ$.
		
		\item The gradient of $\psi_{\tmatrQ}(\matrQ)$ w.r.t.\ $\matrQ$ matches
		the gradient of $\Rs(\matrQ)$ w.r.t.\ $\matrQ$ at $\matrQ = \tmatrQ$.
		
		\item The function $\psi_{\tmatrQ}(\matrQ)$ is concave in terms of
		$\matrQ$ and can be efficiently maximized.
		
	\end{itemize}
	
	\item Replace $\tmatrQ$ with the $\matrQ$ maximizing $\psi_{\tmatrQ}(\matrQ)$.
	
\end{itemize}
As sketched in Fig.~\ref{Fig.OptimizationProc}, a well-defined concave
surrogate function with the mentioned properties enables us to search
throughout the polytope $\setQ(\setB)$ and find a local maximum of
$\Rs(\matrQ)$ over $\setQ(\setB)$, iteratively.  Similar techniques
have also been proposed in~\cite{4494705, 4777638}.


\begin{figure}
	\centering
	\includegraphics[scale=0.68]{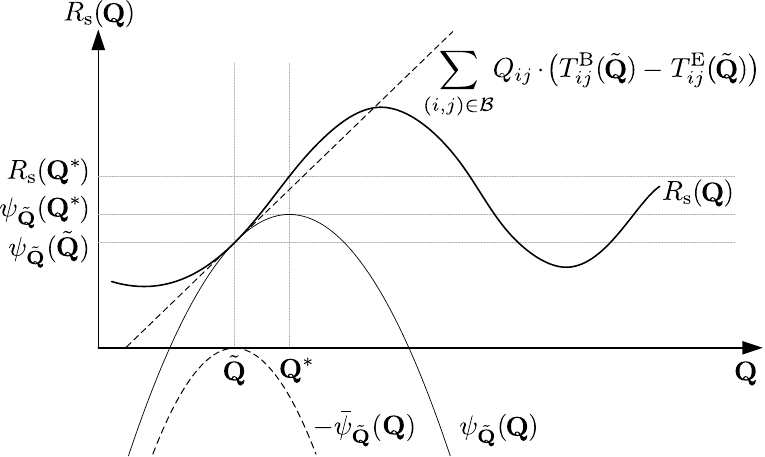}
	\caption{Sketch of the functions appearing in the optimization algorithm
		discussed in Section~\ref{sec::optim}. Note that while the domain is
		one-dimensional in this sketch, it is $|\setB|$-dimensional in the actual
		optimization problem. \\ \mbox{}}
	\label{Fig.OptimizationProc}
\end{figure}


\subsection{The Surrogate Function}


In this section, we first introduce the surrogate function. Then, we show that the
employed surrogate function fulfills the promised properties.

In the following, in the same way that we derived
$\{ p_{ij} \}_{(i,j) \in \setB}$ and $\{ \mu_i \}_{i \in \setS}$ from
$\matrQ = \{ Q_{ij} \}_{(i,j) \in \setB}$\,, we will derive
$\{ \tp_{ij} \}_{(i,j) \in \setB}$ and $\{ \tmu_i \}_{i \in \setS}$ from
$\tmatrQ = \{ \tQ_{ij} \}_{(i,j) \in \setB}$.
For every $(i,j) \in \setB$, let
\begin{align*}
	(\deltaQ)_{ij}
	\defeq
	\frac{Q_{ij} - \tQ_{ij}}
	{\tQ_{ij}},
	\
	(\deltamu)_i
	\defeq
	\frac{\mu_i - \tmu_i}
	{\tmu_i}.
\end{align*}
Moreover, let $\bpsi_{\tmatrQ}(\matrQ)$ be as defined in \eqref{eq:def:bpsi:1}, shown at the bottom of the previous page, where the real parameters $0 < \kappa \leq 1$ and $\kappa' > 0$ are used to
control the shape of $\bpsi_{\tmatrQ}(\matrQ)$. For a given operating point
$\tmatrQ\in\setQ(\setB)$, the surrogate function is specified by (see also Fig.~\ref{Fig.OptimizationProc})
\begin{align}
	\label{equ.Surrogate}
	\psi_{\tmatrQ}(\matrQ)
	&\defeq
	\underbrace{
		\sum_{(i,j) \in \setB}
		Q_{ij}
		\cdot
		\bigl(
		\TB_{ij}(\tmatrQ)
		-
		\TE_{ij}(\tmatrQ)
		\bigr)
	}_{\text{\ding{172}}}
	\underbrace{
		-\
		{\bpsi_{\tmatrQ}(\matrQ)}_{\parbox{0.1cm}{\mbox{} \\[-0.075cm]}}\!.
	}_{\text{\ding{173}}}
\end{align}
Note that, for a fixed $\tmatrQ$, the expression~\ding{172} is linear in
$\matrQ$, while the expression~\ding{173} is concave in $\matrQ$ and has a
zero gradient for $\matrQ = \tmatrQ$. The role of the expression~\ding{172} is
to be a first-order approximation of $\Rs(\matrQ)$ at $\matrQ = \tmatrQ$,
while the role of~\ding{173} is to regularize $\psi_{\tmatrQ}(\matrQ)$. While
many other expressions than~\ding{173} could have been chosen as a
regularization term, the expression in~\ding{173} yields the following
desirable features for $\psi_{\tmatrQ}(\matrQ)$: first, the function
$\psi_{\tmatrQ}(\matrQ)$ can be efficiently maximized over $\matrQ$, second,
maximizing $\psi_{\tmatrQ}(\matrQ)$ implicitly also improves the entropy rate
of the input Markov source described by $\matrQ$.\footnote{Let us make the latter statement
	more precise for $\kappa = 1$ and $\kappa' = 1$. Namely, after some
	algebraic manipulations, one obtains
	$- \bpsi_{\tmatrQ}(\matrQ) = - \sum_{(i,j) \in \setB} Q_{ij} \cdot
	\log(p_{ij}) + \sum_{(i,j) \in \setB} Q_{ij} \cdot \log(\tp_{ij})$.
	Here, the first term equals the entropy rate of a Markov source,
	whereas the latter term,  which is linear in $\matrQ$, guarantees a zero gradient of
	$- \bpsi_{\tmatrQ}(\matrQ)$ for $\matrQ = \tmatrQ$.}



In the following, we examine the promised properties of the employed surrogate
function~(\ref{equ.Surrogate}). For brevity, we use the short-hand notations
$\Rs(\theta)$, $\psi_{\tmatrQ}(\theta)$, and $\tmatrQ$ for
$\Rs\bigl( \matrQ(\theta) \bigr)$,
$\psi_{\tmatrQ}\bigl( \matrQ(\theta) \bigr)$, and
$\matrQ(\ttheta)\in\setQ(\setB)$, respectively.


\begin{lemma}[\emph{Property~1 of the surrogate function $\psi$}]
	\label{lem.rCalc}
	
	The value of $\psi_{\tmatrQ}(\matrQ)$ matches the value of
	$\Rs(\matrQ)$ at $\matrQ = \tmatrQ$, i.e.,
	$ 
	\psi_{\tmatrQ}(\tmatrQ)
	= \Rs(\tmatrQ),
	$ 
	and, in terms of the parameterization defined above,
	$ 
	\label{equ.frstorder}
	\psi_{\tmatrQ}(\ttheta)
	= \Rs(\ttheta).
	$ 
\end{lemma}


\begin{proof}
	We start by noting that $\matrQ=\tmatrQ$ implies $(\deltaQ)_{ij}=0$
	and $(\deltamu)_i = 0$ for all $(i,j) \in \setB$, which in turn
	implies that $\bpsi_{\tmatrQ}(\tmatrQ) = 0$. The result $\psi_{\tmatrQ}
	(\tmatrQ) = \Rs(\tmatrQ)$ follows then from \eqref{equ.Surrogate} along
	with \eqref{equ.concap} in Proposition~\ref{Prop:achiev}.
\end{proof}


\begin{lemma}[\emph{Property~2 of the surrogate function $\psi$}]
	\label{lem.nuCalc}
	
	The gradient of $\psi_{\tmatrQ}(\matrQ)$ w.r.t.\ $\matrQ$ matches the
	gradient of $\Rs(\matrQ)$ w.r.t.\ $\matrQ$ at $\matrQ = \tmatrQ$, i.e.,
	 \begin{align*}
	     \left.
	       \rdiff{\theta}
	         \psi_{\tmatrQ}(\theta)
	     \right|_{\theta = \ttheta}
	         = \left.
	            \rdiff{\theta}
	              \Rs(\theta)
	          \right|_{\theta = \ttheta}
	 \end{align*}
	for any parameterization as defined above.
\end{lemma}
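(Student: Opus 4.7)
The plan is to introduce the smooth parameterization $\matrQ(\theta)$ from Assumption~\ref{ass:parameterization:1} and verify the equality of the two total derivatives term-by-term, exploiting the fact that at $\theta = \ttheta$ we have $(\deltaQ)_{ij} = 0$ and $(\deltamu)_i = 0$.

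First I would compute $\tfrac{\mathrm{d}\psi_{\tmatrQ}(\theta)}{\mathrm{d}\theta}\big|_{\ttheta}$ directly from \eqref{equ.Surrogate}. The linear term $\sum_{(i,j)\in\setB} Q_{ij}(\theta)\bigl(\TB_{ij}(\tmatrQ)-\TE_{ij}(\tmatrQ)\bigr)$ contributes exactly $\sum_{(i,j)\in\setB} Q^{\theta}_{ij}(\ttheta)\bigl(\TB_{ij}(\tmatrQ)-\TE_{ij}(\tmatrQ)\bigr)$ since the $\TB_{ij}(\tmatrQ),\TE_{ij}(\tmatrQ)$ factors are fixed. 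For the penalty $\bpsi_{\tmatrQ}(\theta)$, I would use that $\tfrac{\mathrm{d}}{\mathrm{d}x}\bigl[(1+x)\log(1+x)\bigr]\big|_{x=0} = 1$ together with $\tfrac{\mathrm{d}(\deltaQ)_{ij}}{\mathrm{d}\theta}\big|_{\ttheta} = Q^{\theta}_{ij}(\ttheta)/\tQ_{ij}$ and $\tfrac{\mathrm{d}(\deltamu)_i}{\mathrm{d}\theta}\big|_{\ttheta} = \mu^{\theta}_i(\ttheta)/\tmu_i$ to obtain
\begin{align*}
\frac{\mathrm{d}\bpsi_{\tmatrQ}(\theta)}{\mathrm{d}\theta}\bigg|_{\ttheta}
 = \kappa'\kappa\Bigg(\sum_{(i,j)\in\setB} Q^{\theta}_{ij}(\ttheta) - \sum_{i\in\setS}\mu^{\theta}_i(\ttheta)\Bigg) = 0,
\end{align*}
where both sums vanish by the identities stated in Remark~\ref{remark:Q:parameterized:family:1}. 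This gives $\tfrac{\mathrm{d}\psi_{\tmatrQ}}{\mathrm{d}\theta}\big|_{\ttheta} = \sum_{(i,j)\in\setB} Q^{\theta}_{ij}(\ttheta)\bigl(\TB_{ij}(\tmatrQ)-\TE_{ij}(\tmatrQ)\bigr)$.

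Next I would apply the chain rule to $\Rs$ using Proposition~\ref{Prop:SecRate}:
\begin{align*}
\frac{\mathrm{d}\Rs(\theta)}{\mathrm{d}\theta}\bigg|_{\ttheta}
 = \sum_{(i,j)\in\setB} Q^{\theta}_{ij}(\ttheta)\bigl(\TB_{ij}(\tmatrQ)-\TE_{ij}(\tmatrQ)\bigr)
  + \sum_{(i,j)\in\setB} \tQ_{ij}\cdot\frac{\mathrm{d}\bigl(\TB_{ij}(\matrQ(\theta))-\TE_{ij}(\matrQ(\theta))\bigr)}{\mathrm{d}\theta}\bigg|_{\ttheta}.
\end{align*}
The first sum already matches $\tfrac{\mathrm{d}\psi_{\tmatrQ}}{\mathrm{d}\theta}\big|_{\ttheta}$, so what remains is to prove that the second sum vanishes, which I would reduce to the per-channel vanishing identity $\sum_{(i,j)\in\setB}\tQ_{ij}\tfrac{\mathrm{d}\TB_{ij}(\matrQ(\theta))}{\mathrm{d}\theta}\big|_{\ttheta} = 0$ and its analogue for $\TE_{ij}$ (since Bob's and Eve's channels are structurally identical in the defining expressions \eqref{eq:def:TB:1}--\eqref{eq:def:TE:estimate:1}, differing only in which observation is conditioned on).

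The main obstacle is establishing this last vanishing identity, which is an envelope-theorem-type statement. Inspecting \eqref{eq:def:TB:1}--\eqref{eq:def:TB:estimate:1}, the dependence of $\TB_{ij}(\matrQ)$ on $\theta$ enters in three places: the explicit normalizers $\mu_i$ and $p_{ij}$ in the denominators inside the logarithm; the BCJR posteriors $p_{S_{t-1},S_t|\vY_1^n}$ and $p_{S_{t-1}|\vY_1^n}$; and the output measure $p_{\vY_1^n}$ over which the expectation is taken. My approach is to differentiate each contribution and then apply the weighting by $\tQ_{ij}=\tmu_i\tp_{ij}$ and the summation over $(i,j)\in\setB$. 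The implicit contributions from differentiating $\log(\text{posterior})$ and $p_{\vY_1^n}$ collapse into total derivatives of the normalizations $\sum_{j\in\setSright{i}} p_{S_{t-1},S_t|\vY_1^n}(i,j|\vY_1^n)=p_{S_{t-1}|\vY_1^n}(i|\vY_1^n)$ and $\int p_{\vY_1^n}(\vy_1^n)\,\mathrm{d}\vy_1^n=1$, which equal zero; the explicit contributions cancel by the same score-function argument. This is the FSMC-wiretap analogue of the calculation carried out in~\cite[Lemma~70]{4494705} for the point-to-point setting, and I would defer the detailed verification (together with the justification of interchanging $\lim_{n\to\infty}$ and $\tfrac{\mathrm{d}}{\mathrm{d}\theta}$, which is legitimate because ISI channels are indecomposable and the BCJR posteriors depend smoothly on $\theta$ with geometrically decaying sensitivity) to an appendix. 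Combining this vanishing identity with the earlier steps yields $\tfrac{\mathrm{d}\Rs(\theta)}{\mathrm{d}\theta}\big|_{\ttheta} = \tfrac{\mathrm{d}\psi_{\tmatrQ}(\theta)}{\mathrm{d}\theta}\big|_{\ttheta}$, which is the claim.
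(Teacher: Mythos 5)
Your proposal is correct and takes essentially the same route as the paper: after showing that $\bpsi_{\tmatrQ}$ has vanishing derivative at $\theta=\ttheta$, you reduce the claim to the identity $\sum_{(i,j)\in\setB}\tQ_{ij}\cdot\rdiff{\theta}\bigl(\TB_{ij}(\theta)-\TE_{ij}(\theta)\bigr)\big|_{\theta=\ttheta}=0$, which is exactly the non-trivial step the paper discharges by invoking \cite[Lemma~64]{4494705} (note: Lemma~64, not Lemma~70, of that reference). Your envelope-theorem sketch of that vanishing identity is the standard argument behind the cited lemma, so the two proofs coincide in substance.
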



\begin{proof}
	We start by showing that
	$
	\left.
	\rdiff{\theta}
	\bpsi_{\tmatrQ}(\theta)
	\right|_{\theta = \ttheta} 
	= 0.
	$
	Indeed,
	\begin{align*}
			\left.
			\rdiff{\theta}
			\bpsi_{\tmatrQ}(\theta)
			\right|_{\theta = \ttheta} \hspace{-0.11cm}
			&= \Biggl.
			\kappa 
			\kappa'
			\cdot
			\Biggl(
			\sum_{(i,j) \in \setB}\hspace{-0.2cm}
			Q^{\theta}_{ij}(\theta)
			\cdot 
			\log
			\bigl( 
			1 + \kappa \cdot (\deltaQ(\theta))_{ij}
			\bigr)\\
			&-
			\sum_{i\in\setS}
			\mu_i^\theta(\theta)
			\cdot
			\log
			\bigl(
			1 + \kappa\cdot(\deltamu(\theta))_i
			\bigr)
			\Biggr)
			\Biggr|_{\theta = \ttheta}
			= 0.\alabel{eq:bpsi:property:2}
	\end{align*}
	We then have
	\begin{align*}
		&\left.
		\rdiff{\theta}
		\psi_{\tmatrQ}(\theta)
		\right|_{\theta = \ttheta}
		= \left.
		\rdiff{\theta}
		\bigl(
		\psi_{\tmatrQ}(\theta)
		+
		\bpsi_{\tmatrQ}(\theta)
		\bigr)
		\right|_{\theta = \ttheta}\\
		&\qquad= \left.
		\rdiff{\theta}
		\left(
		\sum_{(i,j) \in \setB}
		Q_{ij}(\theta)
		\cdot
		\bigl(
		\TB_{ij}(\ttheta)
		-
		\TE_{ij}(\ttheta)
		\bigr)
		\right)
		\right|_{\theta = \ttheta} 
		\nonumber \\
		&\qquad= \left.
		\rdiff{\theta}
		\left(
		\sum_{(i,j)\in\setB}
		Q_{ij}(\theta)
		\cdot
		\bigl(
		\TB_{ij}(\theta)
		-
		\TE_{ij}(\theta)
		\bigr)
		\right)
		\right|_{\theta = \ttheta}\\
		&\qquad= \left.
		\rdiff{\theta}
		\Rs(\theta)
		\right|_{\theta = \ttheta},
		\alabel{eq:bpsi:property:3}
	\end{align*}
	where the first equality follows from~\eqref{eq:bpsi:property:2}, the
	second equality follows from \eqref{equ.Surrogate}, the third
	equality follows from~\cite[Lemma~64]{4494705}, and the fourth
	equality follows from \eqref{equ.concap}. 
\end{proof}


Despite the close similarity between the third and the fourth expressions
in~\eqref{eq:bpsi:property:3}, this is a non-trivial result because of the
non-triviality of~\cite[Lemma~64]{4494705}. 


\begin{lemma}[\emph{Convexity of the function $\bpsi_{\tmatrQ}$}]
	\label{lem.convexity}
	
	The function $\bpsi_{\tmatrQ}(\matrQ)$ is convex over
	$\matrQ \in \setQ(\setB)$.
\end{lemma}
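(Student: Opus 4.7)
The plan is to rewrite $\bpsi_{\tmatrQ}(\matrQ)$ in a form where the well-known joint convexity of $(a,b)\mapsto a\log(a/b)$ can be invoked directly, and then use the fact that the composition of a convex function with an affine map is convex.

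First, I would introduce the substitutions
\begin{align*}
  a_{ij} \defeq \tQ_{ij}\bigl(1 + \kappa\cdot (\deltaQ)_{ij}\bigr)
       = (1-\kappa)\tQ_{ij} + \kappa Q_{ij}, \qquad
  b_{i} \defeq \tmu_i\bigl(1 + \kappa\cdot (\deltamu)_i\bigr)
       = (1-\kappa)\tmu_i + \kappa \mu_i.
\end{align*}
Since $\mu_i = \sum_{j\in\setSright{i}} Q_{ij}$ and $\tmu_i = \sum_{j\in\setSright{i}} \tQ_{ij}$ by the definition of $\setQ(\setB)$, it follows that $b_i = \sum_{j\in\setSright{i}} a_{ij}$. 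Moreover, the map $\matrQ \mapsto \bigl(\{a_{ij}\},\{b_i\}\bigr)$ is affine in $\matrQ$. In these variables, after expanding logarithms and collecting terms that are linear in $a_{ij}$ (hence affine in $\matrQ$ and irrelevant for convexity), one obtains
\begin{align*}
  \bpsi_{\tmatrQ}(\matrQ)
     = \kappa'\sum_{(i,j)\in\setB} a_{ij}\log\!\frac{a_{ij}}{b_i} \,+\, L(\matrQ),
\end{align*}
where $L(\matrQ)$ collects the affine residual terms (involving the fixed constants $\log\tQ_{ij}$ and $\log\tmu_i$).

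Next, I would invoke the standard fact that the perspective-type function $g(a,b) \defeq a\log(a/b)$ is jointly convex on $\{(a,b): a\geq 0,\, b>0\}$ (equivalently, the log-sum inequality). A quick verification: the Hessian of $g$ has diagonal entries $1/a$ and $a/b^2$ and off-diagonal $-1/b$, so its determinant is zero and its trace is positive, making it positive semi-definite. Hence each term $a_{ij}\log(a_{ij}/b_i)$ is a convex function of the pair $(a_{ij},b_i)$, and therefore, being composed with the affine map $\matrQ \mapsto (a_{ij},b_i)$, it is convex in $\matrQ$. Summing convex functions preserves convexity, and adding the affine $L(\matrQ)$ does not affect convexity, so $\bpsi_{\tmatrQ}(\matrQ)$ is convex on $\setQ(\setB)$.

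The only real obstacle is the algebraic recognition step: seeing that the two sums in Definition~\ref{def.surrogate} collapse to a joint-KL-divergence-like expression $\sum_{(i,j)} a_{ij}\log(a_{ij}/b_i)$ once one uses the stationarity constraint $b_i = \sum_{j\in\setSright{i}} a_{ij}$. After that rewriting, convexity is immediate from joint convexity of relative entropy, without any further computation.
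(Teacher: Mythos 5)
Your proof is correct, and it takes a genuinely different route from the paper's. The paper restricts attention to affine parameterizations $\matrQ(\theta)$, computes the second derivative $\bpsi_{\tmatrQ}^{\theta\theta}(\theta)$ explicitly (obtaining a difference of the form $\sum_{(i,j)} (Q^{\theta}_{ij})^2/(\cdot) - \sum_i (\mu_i^{\theta})^2/(\cdot)$), and then applies Jensen's inequality to show this is nonnegative along every line, which yields convexity. You instead perform the affine substitution $a_{ij} = (1-\kappa)\tQ_{ij} + \kappa Q_{ij}$, $b_i = \sum_{j \in \setSright{i}} a_{ij}$, collapse the two sums into $\kappa' \sum_{(i,j)} a_{ij}\log(a_{ij}/b_i)$ plus an affine remainder, and invoke joint convexity of the perspective function $(a,b) \mapsto a\log(a/b)$. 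The two arguments rest on the same underlying fact---the Jensen step in the paper is precisely the log-sum inequality that makes $a\log(a/b)$ jointly convex---but your version buys several things: it avoids differentiation entirely, so no smoothness or interiority assumptions on the parameterization are needed; it remains valid at boundary points where some $Q_{ij} = 0$ (where the paper's second-derivative expression would be problematic); and it makes the structural reason for convexity transparent, namely that $\bpsi_{\tmatrQ}$ is, up to an affine term and the change of variables, a conditional relative-entropy-type functional. The paper's computational route has the advantage of fitting directly into the $\theta$-parameterization framework used for Lemmas~\ref{lem.rCalc} and~\ref{lem.nuCalc} and of producing the explicit curvature expression that motivates the role of $\kappa$ and $\kappa'$ as step-size controls. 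One small point worth making explicit in your write-up: the identity $\sum_i b_i \log b_i = \sum_{(i,j)\in\setB} a_{ij}\log b_i$, which is what collapses the two sums into the single relative-entropy form, uses exactly the constraint $b_i = \sum_{j\in\setSright{i}} a_{ij}$ inherited from the polytope $\setQ(\setB)$; you state this, so the argument is complete.
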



\begin{proof}
	See Appendix~\ref{apndx.convexity}.
\end{proof}


\begin{lemma}[\emph{Property~3 of the surrogate function $\psi$}]
	The surrogate function $\psi_{\tmatrQ}(\matrQ)$ is concave over
	$\matrQ \in \setQ(\setB)$.
\end{lemma}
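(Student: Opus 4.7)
The plan is to observe that concavity of $\psi_{\tmatrQ}(\matrQ)$ follows almost immediately from the decomposition in Definition~\ref{def.surrogate} together with Lemma~\ref{lem.convexity}. Specifically, we rewrite
\begin{align*}
\psi_{\tmatrQ}(\matrQ)
 = \underbrace{\sum_{(i,j)\in\setB} Q_{ij} \cdot \bigl( \TB_{ij}(\tmatrQ) - \TE_{ij}(\tmatrQ) \bigr)}_{\text{linear in } \matrQ}
 \;-\; \bpsi_{\tmatrQ}(\matrQ).
\end{align*}

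The first step is to note that the coefficients $\TB_{ij}(\tmatrQ) - \TE_{ij}(\tmatrQ)$ depend only on the fixed point $\tmatrQ$ and are therefore constants with respect to the optimization variable $\matrQ$. Hence, the first term is an affine (in particular, concave) function of $\matrQ$ on the polytope $\setQ(\setB)$. The second step is to invoke Lemma~\ref{lem.convexity}, which establishes that $\bpsi_{\tmatrQ}(\matrQ)$ is convex on $\setQ(\setB)$; consequently, $-\bpsi_{\tmatrQ}(\matrQ)$ is concave on $\setQ(\setB)$. The final step is to combine these two facts: the sum of an affine function and a concave function is concave, and $\setQ(\setB)$ is convex (being the intersection of a simplex with linear stationarity constraints), so $\psi_{\tmatrQ}(\matrQ)$ is concave on $\setQ(\setB)$ as claimed.

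No substantive obstacle is expected, since all the analytic work has been front-loaded into Lemma~\ref{lem.convexity}; the present lemma amounts to bookkeeping on top of that result. The only thing to double-check is that the map $\matrQ \mapsto \{\mu_i\}$ used inside $\bpsi_{\tmatrQ}$ is indeed well-defined and linear on $\setQ(\setB)$ (which it is, via $\mu_i = \sum_{j \in \setSright{i}} Q_{ij}$), so that the convexity statement of Lemma~\ref{lem.convexity} is applied on the correct domain. The proof itself can be written in just a few lines.
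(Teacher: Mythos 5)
Your proof is correct and is essentially identical to the paper's: the first term is linear in $\matrQ$ because the coefficients $\TB_{ij}(\tmatrQ) - \TE_{ij}(\tmatrQ)$ are fixed, and $-\bpsi_{\tmatrQ}$ is concave by Lemma~\ref{lem.convexity}. Your added check that $\mu_i = \sum_{j\in\setSright{i}} Q_{ij}$ is linear on $\setQ(\setB)$ is a sensible (if unstated in the paper) detail.
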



\begin{proof}
	This follows immediately from Lemma~\ref{lem.convexity} and from
	$\sum_{(i,j)\in\setB}Q_{ij}\cdot\big( \TB_{ij}(\tmatrQ) -
	\TE_{ij}(\tmatrQ) \big)$ being a linear function in terms of $\matrQ$.
\end{proof}


	
\subsection{Maximizing the Surrogate Function}


Let $\tmatrQ \in \setQ(\setB)$ denote the parameter of a Markov source attained at the
current iteration of the proposed algorithm. In the next iteration, $\tmatrQ$
is replaced by $\matrQ^{*} = \bigl\{ Q^{*}_{ij} \bigr\}_{(i,j) \in \setB}$,
where
\begin{align}
	\matrQ^{*}
	&\defeq
	\underset{\matrQ \in \setQ(\setB)}%
	{\operatorname{arg\ max}} \ 
	\psi_{\tmatrQ}(\matrQ).
	\label{eq:surrogate:function:optimization:1}
\end{align}


\begin{proposition}[\emph{The optimum distribution
		$\matrQ^{*}$}]\label{prop.DistEig}
	
	The optimum Markov source distribution $\matrQ^{*}$
	in~\eqref{eq:surrogate:function:optimization:1} is calculated as
	follows. Let $\matrA \defeq \bigl( A_{ij}\bigr)_{i,j \in \setS}$ be the
	matrix with entries
	\begin{align}
		\label{equ.Aij}
		A_{ij}
		&\defeq
		\begin{cases}
			{\displaystyle 
				\tp_{ij} 
				\cdot
				\exp
				\left(
				\frac{\tTB_{ij} - \tTE_{ij}}
				{\kappa \kappa'}
				\right)
			}
			& \text{$\big((i,j) \in \setB\big)$} \\
			0
			& \text{(otherwise)}
		\end{cases},
	\end{align}
	where $\tTB_{ij} \defeq \TB_{ij}(\tmatrQ)$ and
	$\tTE_{ij} \defeq \TE_{ij}(\tmatrQ)$ are defined according to
	Definition~\ref{def:SecRate}. Note that $\matrA$ is a non-negative
	matrix, i.e., a matrix with non-negative entries. Let $\rho$ be the
	Perron--Frobenius eigenvalue of the matrix $\matrA$, with the corresponding
	right eigenvector $\vgamma = (\gamma_j)_{j \in \setS}$.\footnote{Recall that
		the Perron--Frobenius eigenvalue of an irreducible non-negative matrix is
		the eigenvalue with the largest absolute value. One can show that the
		Perron--Frobenius eigenvalue is a positive real number and that the
		corresponding right eigenvector can be multiplied by a suitable scalar
		such that all entries are positive real numbers.} Define
	\begin{align}
		\label{equ.smaeee}
		\hp_{ij}^{*}
		&\defeq
		\frac{A_{ij}}{\rho}
		\cdot
		\frac{\gamma_j}{\gamma_i},
		\quad (i,j) \in \setB.
	\end{align}
	Calculate $\{\hQ^{*}_{ij}\}_{(i,j)\in\setB}$ from
	$\{\hp_{ij}^{*}\}_{(i,j)\in\setB}$ (in the same way that we derived
	$\{ Q_{ij} \}_{(i,j) \in \setB}$ from $\{ p_{ij} \}_{(i,j) \in
		\setB}$). If
	\begin{align}
		\label{equ.kappalwrbound}
		\kappa
		&\geq
		\frac{\tQ_{ij} - \hQ^{*}_{ij}}
		{\tQ_{ij}},
		\quad (i,j) \in \setB,
	\end{align}
	then the parameter $\matrQ^{*}$ is given by solving the following system of linear
	equations in terms of $\bigl\{ Q^{*}_{ij} \bigr\}_{(i,j) \in \setB}$
	\begin{align*}
		\hspace*{0pt}\left\{
		\begin{array}{r@{\ }c@{\ }l}
			\hspace{-6pt}Q^{*}_{ij} 
			- 
			\hp_{ij}^{*}
			\sum\limits_{j'\in\setSright{i}}
			Q_{ij'}^{*}
			-\frac{1-\kappa}
			{\kappa}
			\cdot
			\bigl(
			\tmu_i
			\hp_{ij}^{*}
			-
			\tQ_{ij}
			\bigr)&=&0, \>
			(i,j) \in \setB, \\
			\sum_{r\in\setSleft{i}}
			Q_{ri}^{*}
			-
			\sum_{j\in\setSright{i}}
			Q^{*}_{ij}
			&=& 0,
			\> i \in \setS, \\[10pt]
			\sum_{(i,j)\in\setB}
			Q^{*}_{ij}
			&=& 1.
		\end{array}
		\right.
	\end{align*}
\end{proposition}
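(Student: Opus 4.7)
The approach is Lagrangian duality. Since $\psi_{\tmatrQ}$ is concave on the polytope $\setQ(\setB)$, any KKT point is a global maximizer. The plan is to form the Lagrangian with multipliers $\lambda_i$ (for $i\in\setS$) attached to the flow-conservation constraints $\sum_{k\in\setSleft{i}}Q_{ki} = \sum_{j\in\setSright{i}}Q_{ij}$, a multiplier $\eta$ attached to the normalization $\sum_{(i,j)\in\setB}Q_{ij}=1$, and to keep the non-negativity constraints implicit, to be validated a posteriori through \eqref{equ.kappalwrbound}.

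The central calculation is $\partial \psi_{\tmatrQ}/\partial Q_{ij}$. Using the identities $\tQ_{ij}(1+\kappa(\deltaQ)_{ij}) = (1-\kappa)\tQ_{ij} + \kappa Q_{ij}$, $\tmu_i(1+\kappa(\deltamu)_i) = (1-\kappa)\tmu_i + \kappa \mu_i$, and $\mu_i = \sum_{j'\in\setSright{i}}Q_{ij'}$, the gradient of $\bpsi_{\tmatrQ}$ collapses to $\kappa\kappa'\bigl[\log(1+\kappa(\deltaQ)_{ij}) - \log(1+\kappa(\deltamu)_i)\bigr]$, so stationarity at $\matrQ=\matrQ^*$ reads
\[
\tTB_{ij}-\tTE_{ij} - \kappa\kappa'\log\frac{1+\kappa(\deltaQ)_{ij}^*}{1+\kappa(\deltamu)_i^*} + \lambda_j - \lambda_i - \eta = 0.
\]
Introducing $\gamma_i \defeq \exp\bigl(\lambda_i/(\kappa\kappa')\bigr)$ and $\rho\defeq\exp\bigl(\eta/(\kappa\kappa')\bigr)$ and exponentiating yields the ratio identity
\[
\frac{(1-\kappa)\tQ_{ij}+\kappa Q_{ij}^*}{(1-\kappa)\tmu_i+\kappa\mu_i^*} = \frac{A_{ij}\gamma_j}{\rho\gamma_i},
\]
with $A_{ij}$ as in \eqref{equ.Aij}; the right-hand side is precisely the $\hp_{ij}^*$ of \eqref{equ.smaeee}.

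The Perron--Frobenius step is the key observation. Summing the left-hand side of the ratio identity over $j\in\setSright{i}$ gives $1$, since numerator and denominator both sum to $(1-\kappa)\tmu_i+\kappa\mu_i^*$; hence $\hp^*$ must be row-stochastic, which is equivalent to $\matrA\vgamma = \rho\vgamma$. Applying Perron--Frobenius to the irreducible non-negative matrix $\matrA$ fixes $\rho$ and $\vgamma$ (up to positive scaling of $\vgamma$). With $\hp^*$ in hand, rearranging the ratio identity gives $\kappa Q_{ij}^* = \kappa\hp_{ij}^*\mu_i^* + (1-\kappa)(\tmu_i\hp_{ij}^* - \tQ_{ij})$, and substituting $\mu_i^*=\sum_{j'\in\setSright{i}}Q_{ij'}^*$ produces exactly the first equation of the displayed linear system; the remaining two equations are the defining equality constraints of $\setQ(\setB)$. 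Finally, expressing the resulting $Q_{ij}^*$ in terms of $\hQ^*$ and $\hmu^*$ shows that condition \eqref{equ.kappalwrbound} is equivalent to $Q_{ij}^*\geq 0$ for every $(i,j)\in\setB$, so the implicit non-negativity constraints hold and the KKT verification is complete.

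The hard part is the Perron--Frobenius reduction: the stationarity equations are a tightly coupled nonlinear system in the primal variable $\matrQ^*$ and the $|\setS|+1$ dual variables $\vlambda,\eta$, and it is only after recognizing that row-stochasticity of $\hp^*$ transforms these duals into an eigenvector/eigenvalue pair of $\matrA$ that the problem admits the promised closed-form reduction to a Perron--Frobenius computation followed by a linear solve.
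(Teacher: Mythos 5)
Your proof is correct and follows essentially the same route as the paper's: Lagrangian stationarity over the polytope, the gradient identity $\partial\bpsi_{\tmatrQ}/\partial Q_{ij} = \kappa\kappa'\log\bigl(\hp^{*}_{ij}/\tp_{ij}\bigr)$ via the substitution $\hQ^{*}_{ij}=(1-\kappa)\tQ_{ij}+\kappa Q^{*}_{ij}$, row-stochasticity of $\hp^{*}$ forcing $\matrA\vgamma=\rho\vgamma$, and the final rearrangement into the linear system together with the nonnegativity check that yields~\eqref{equ.kappalwrbound}. The one step you assert rather than argue is why the eigenpair must be the Perron--Frobenius one among all eigenpairs of $\matrA$: the paper justifies this via the identity $\psi_{\tmatrQ}(\matrQ^{*})=\log\rho$ and maximization over eigenvalues, whereas in your KKT framing it follows because only the Perron eigenvector is positive (up to scaling) and hence the only one producing valid nonnegative $\hp^{*}_{ij}$ --- worth one explicit sentence.
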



\begin{proof}
	See Appendix~\ref{apndx.bee}.
\end{proof}


Note that Proposition~\ref{prop.DistEig} applies Perron--Frobenius theory for
irreducible non-negative matrices. One can verify that $\matrA$ is irreducible
except for uninteresting boundary cases. Note also that increasing the real
parameters $\kappa$ and $\kappa'$ makes the surrogate function to be narrower
and steeper, which reduces the aggressiveness of the searching step size.



\begin{algorithm}[ht!]
	\caption{Secure Rate Optimization}
	\begin{spacing}{1.12}
		\label{alg.EM}
		$r \leftarrow 0$\;
		\While{\emph{\emph{convergence occurs}}}{
			$\tmatrQ \leftarrow \matrQ^{\langle r \rangle}$\;
			Generate a sequence $\cvx_1^n$ based on $\tmatrQ$\;
			Simulate Bob's (Eve's) channel with input $\cvx_1^n$ to \parbox{7.5cm}{~\\[-5pt]\hspace*{-5pt}obtain $\cvy_1^n$ ($\cvz_1^n$) at the output\;}\\
			\For{$(i,j) \in \setB$}{\vspace*{2pt}
				Calculate $\cTB_{ij}(\tilde{\matrQ},\cvy_1^n)$ and $\cTE_{ij}(\tilde{\matrQ},\cvz_1^n)$ according \parbox{7.5cm}{~\\[-5pt]\hspace*{-5pt}to~\eqref{eq:def:TB:1} and~\eqref{eq:def:TE:1}}\;\vspace*{5pt}
				$\displaystyle\check A_{ij}\leftarrow 
				\tp_{ij} 
				\cdot
				\exp
				\left(
				\frac{{\cTB_{ij}(\tilde{\matrQ},\cvy_1^n)
						-
						\cTE_{ij}(\tilde{\matrQ},\cvz_1^n)}}
				{\kappa \kappa'}
				\right)$\;
			}
			$\cRs^{\langle r \rangle}\leftarrow  \sum_{(i,j)\in\setB} \tilde{Q}_{ij} \cdot \bigl(
			\cTB_{ij}(\tilde{\matrQ},\cvy_1^n)
			-
			\cTE_{ij}(\tilde{\matrQ},\cvz_1^n) \bigr)^{\!\! +}$\;
			Find the Perron--Frobenius eigenvalue $\check \rho$ and the \parbox{7.5cm}{~\\[-5pt]\hspace*{-5pt}corresponding right eigenvector $\cgamma$ of~$\bigl( \check A_{ij}\bigr)_{i,j \in \setS}$\;}\\ 
			\For{$(i,j) \in \setB$}{\vspace*{2pt}
				$\displaystyle
				\chp_{ij}^{*}
				\leftarrow
				\frac{\check A_{ij}}
				{\check \rho}
				\cdot
				\frac{\check \gamma_j}
				{\check \gamma_i}$\;
			}
			Calculate $\{\chQ^{*}_{ij}\}_{(i,j)\in\setB}$ from
			$\{\chp_{ij}^{*}\}_{(i,j)\in\setB}$ (as we \parbox{7.5cm}{~\\[-5pt]\hspace*{-5pt}derived
				$\{ Q_{ij} \}_{(i,j) \in \setB}$ from
				$\{ p_{ij} \}_{(i,j) \in \setB}$)\;}\\[2pt]
			\uIf{$ 
				\kappa
				\geq
				({\tQ_{ij} - \chQ^{*}_{ij}})/{\tQ_{ij}},
				$ 
				for all $(i,j) \in \setB$}{
				Calculate $\cmatrQ^*$ by solving the
				following system of \parbox{7.5cm}{~\\[-5pt]\hspace*{-5pt}linear equations in terms of $\bigl\{ \check Q^{*}_{ij}
					\bigr\}_{(i,j) \in \setB}$}\vspace*{-2pt}
				\begin{align*}
					\hspace*{-0.3cm}\left\{
					\begin{array}{r@{\ }c@{\ }l}
						\check Q^{*}_{ij} 
						- 
						\chp_{ij}^{*}
						\sum\limits_{j'\in\setSright{i}}
						\check Q_{ij'}^{*}
						-\frac{1-\kappa}
						{\kappa}
						\cdot
						\bigl(
						\tmu_i
						\chp_{ij}^{*}
						-
						\tQ_{ij}
						\bigr)&=&0, \\[-0.25cm]
						(i,j) &\in& \setB, \\
						\sum_{r\in\setSleft{i}}
						\check Q_{ri}^{*}
						-
						\sum_{j\in\setSright{i}}
						\check Q^{*}_{ij}
						&=& 0,\\ 
						i &\in& \setS, \\
						\sum_{(i,j)\in\setB}
						\check Q^{*}_{ij}
						&=& 1;
					\end{array}
					\right.\alabel{equ:Lsystem}
				\end{align*}\vspace*{-8pt}}
			\ElseIf {$
				\kappa
				<
				\frac{({\tQ_{ij} - \chQ^{*}_{ij}})}{\tQ_{ij}},
				$
				for any $(i,j) \in \setB$}{
				Suitably change $\kappa$ and go to Step~6\;
			}
			$r\leftarrow r+1$\;
			$\matrQ^{\langle r \rangle} \leftarrow \cmatrQ^*$\;
		}	
	\end{spacing}
\end{algorithm}


The proposed optimization procedure is summarized in
Algorithm~\ref{alg.EM}. Note that this optimization procedure can be
considered as a variation of the well-known EM
algorithm~\cite{dempster1977maximum} comprised of two steps: Expectation
(E-step) and Maximization (M-step). Namely, identifying a concave surrogate
function around a local operating point resembles the E-step and maximization
of the surrogate function to achieve a higher secure rate corresponds to the
M-step. Given this, Algorithm~\ref{alg.EM} has a similar convergence behavior
as the EM algorithm~\cite{wu83}. 

A similar manipulation as performed in~\cite[Eqs.~(52), (53)]{4494705} shows
that, indeed, $\psi_{\tmatrQ}(\cmatrQ^*)\geq\psi_{\tmatrQ}(\matrQ)$ for all
$\matrQ\in\setQ(\setB)$. Consequently, at each iteration $r$, we have
$\psi_{\matrQ^{\langle r \rangle}}(\matrQ^{\langle r+1 \rangle})\geq
\psi_{\matrQ^{\langle r \rangle}}(\matrQ^{\langle r \rangle})$, where the
equality
$\psi_{\matrQ^{\langle r \rangle}}(\matrQ^{\langle r+1 \rangle})
=\psi_{\matrQ^{\langle r \rangle}}(\matrQ^{\langle r \rangle})$ occurs at
the stationary point of Algorithm~\ref{alg.EM}. The stationary points of the
algorithm correspond to the critical points (i.e., local maxima, local
minima, and saddle points) of $\Rs(\matrQ)$ over the polytope
$\setQ(\setB)$. Since the local minima and the saddle points are not stable
stationary points of Algorithm~\ref{alg.EM}, the algorithm converges to a
local maximum of $\Rs(\matrQ)$ achievable from a starting point
$\matrQ^{\langle 0 \rangle}\in\setQ(\setB)$. 



The complexity of one iteration of Algorithm~\ref{alg.EM} is
$\mathcal{O}\bigl( n\cdot 2^{\nu+1} + (2^\nu)^3 \bigr)$, where
	$n\cdot2^{\nu+1}$ stems from estimating $\cTB_{ij}(\matrQ,\cvy_1^n)$,
	$\cTE_{ij}(\matrQ,\cvz_1^n)$, and the Perron-Frobenius eigenvalue
	$\check \rho$ through Monte-Carlo simulations~\cite{1397934}, where $n$ is
	the number of trellis sections used for the Monte-Carlo simulation, and
where $(2^\nu)^3$ stems from solving the system of linear equations
in~\eqref{equ:Lsystem}. (Potentially, the sparsity of the system of linear
equations in~\eqref{equ:Lsystem} can be used to reduce the latter complexity
estimate.)


\section{Practical Implications and Simulation Results}
\label{sec::simul}

In this section, we approximate an NB-IoT uplink channel in a
challenging environment by an ISI channel. Then, we
describe practically relevant wiretapping scenarios and study the maximum
achievable secure rates by applying Algorithm~\ref{alg.EM} \text{to the resulting
ISI-WTCs.}


\subsection{NB-IoT Uplink Channel}


\begin{figure*}
	\begin{minipage}[c]{0.52\textwidth}
		\centering
		\includegraphics[scale=0.78]{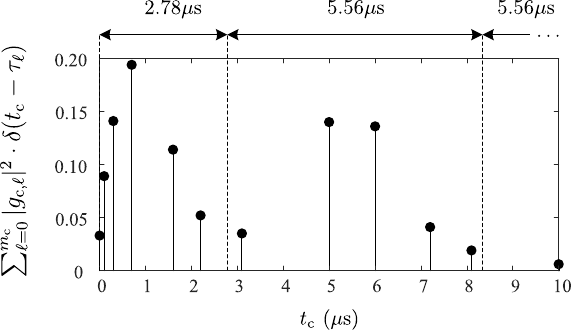}
		\caption{Power-delay profile of the multipath channel. \\ \mbox{}}
		\label{fig:scat}
	\end{minipage}
	\hfill
	\begin{minipage}[c]{0.44\textwidth}
		\captionsetup{justification=centering}
		
		\captionof{table}{\mbox{} \\ 
			\textsc{ISI Channel Model Corresponding to the 
				Power-Delay Profile of~Fig.~\ref{fig:scat},
				with
				$W_{\mathrm{PRB}}^{-1}=5.56~\mu\mathrm{s}$, 
				$f_{\mathrm{carr}}=900$~MHz}}
		\label{tab:simscat}
		\centering
		{\scriptsize
			\begin{tabular}{crcc}
				\hline \\[-5pt]
				$\displaystyle\ell$ & Period ($\displaystyle\mu$s) 	   
				& $\displaystyle|{g}_\ell|^2$ 
				& $\displaystyle e^{-i 2\pi f_\mathrm{carr}\tau_\ell}$ \\
				\hline\\[-5pt]
				0 	& $0-\>\>2.78$ & $0.624$ & 1 \\[5pt] 
				1 	& $2.78-\>\>8.33$ 		   & $0.370$ & 1 \\[5pt] 
				2 	& $8.33-13.89$   & $0.006$ & 1 \\[5pt] 
				\hline
			\end{tabular}
		}
	\end{minipage}
	\vspace*{-0.55cm}
\end{figure*}

Let ${X}(\tc), {Y}(\tc),$ and ${N}(\tc)$ be continuous-time random signals
corresponding to, respectively, the channel's input, the channel's output, and
additive noise.\footnote{The variable $\tc \in \R$ will be used
	to denote continuous time, in order to distinguish it from the discrete time
	variable $t \in \Z$ that is used elsewhere.} The general model for the multipath channel, consisting of a
direct path and $m_{\mathrm{c}}\in\Z$ tapped-delay paths, is described by
$$Y(\tc) \defeq \sum_{\ell=0}^{m_{\mathrm{c}}} |g_{\mathrm{c},\ell}| e^{i \theta_\ell} \cdot
X(\tc-\tau_\ell) + N(\tc),$$ where $i$ denotes the imaginary unit and where the
real parameters $|g_{\mathrm{c},\ell}|$, $\theta_\ell$, and $\tau_\ell$ are the gain, the
phase rotation, and the delay introduced by the $\ell$-th path,
respectively.\footnote{We assume that the local oscillators at the transmitter and
	the receiver terminals are synchronized, so the phase reference
	$\theta_0$ is known. Then, for $\theta_0=0$ and $\tau_0=0$, the phase
	rotation in the $\ell$-th path (w.r.t. the direct path) is given by
	$\theta_\ell=-2\pi f_\mathrm{carr}\tau_\ell$, where $f_\mathrm{carr}$ is the
	carrier frequency.} Fig.~\ref{fig:scat} illustrates a typical power-delay
profile of a multipath channel that was measured in an urban area with
moderate to high tree density~\cite[Fig.\ 2.51]{93141079}.


The uplink channel of the NB-IoT occupies a single physical resource block
(PRB) from the LTE configuration, and so the bandwidth of the transmitted
signal is restricted to
$W_{\mathrm{PRB}}=180$~kHz~\cite[Sec. 5.2.3]{3GPPTS36211}. As can be seen
from~Fig.~\ref{fig:scat}, the delay spread of the wireless channel exceeds the
duration of a single channel use
($W_{\mathrm{PRB}}^{-1}=5.56\mu\mathrm{s}$).\footnote{The statistical channel
	models in COST 207 are valid for applications having an average bandwidth of
	about $200$~kHz~\cite[Sec.\ 2.5.4.2]{93141079}.} This issue along with the
multipath propagation gives rise to ISI. As shown in~Table~\ref{tab:simscat},
the ISI tap coefficients are captured by sampling at times
$\frac{W_{\mathrm{PRB}}^{-1}}{2}+\ell\cdot W_{\mathrm{PRB}}^{-1}$ for
$\ell\in\{0,1,2\}$. By ignoring the third tap, due to
its relative amplitude being 10dB below the first tap, the sampled output of a
filter matched to the shaping pulse at the receiver gives rise to an ISI
channel described by $\big(g(D) \defeq 0.792 + 0.610D, \sigma^2 \big)$.




Before introducing the wiretapping scenario, we consider a point-to-point (P2P)
setup where the only channel input constraint is an average-energy
constraint. This simplification allows us to use the well-known “water-pouring” formulas for
analyzing the capacities of Bob's and Eve's P2P channels in the
examined ISI-WTC.
Let us consider the average-energy constraint per input
symbol $\Es$ (in Joules), the symbol duration $T$ (in seconds), a perfect
lowpass filter of bandwidth $W \defeq \frac{1}{2T}$ with the sampling at
Nyquist frequency $1/T$ at the receiver, and the power spectral density $N(f)$
(in Watts per Hertz) of the additive Gaussian noise before the lowpass filter.
The unconstrained (besides some average-energy constraint) capacity of an ISI
channel, described by $\big(g(D)=\sum_{t=0}^{m} g_t D^t,N(f)\big)$, is given
by the ``water-pouring'' formula (see, e.g.,\cite{We-2003})
\begin{equation*}
	C(g,W) = \frac{1}{2} \cdot \int\limits_{-\infty}^{\infty} \log^+ \left(
	\frac{\alpha} {N(f)/|G(f)|^2} \right) \, \mathrm{d} f,
\end{equation*}
where
 	\begin{align*}
 		G(f)
 		&= \begin{cases}
 			{\displaystyle 
 				\frac{\sum_{\ell=0}^{m} g_{\ell} e^{-i 2 \ell \pi f T}}
 				{\sqrt{\sum_{\ell=0}^{m} |g_{\ell}|^2}}}
 			& \text{(if $|f| \leq W$)} \\
 			0 
 			& \text{(otherwise)}
 		\end{cases},
 	\end{align*}
and where $\alpha > 0$ is chosen such that
\begin{align*}
	\Es = \int\limits_{-\infty}^{\infty}
	{\left(
		\alpha  -  \frac{N(f)}{|G(f)|^2}
		\right)^{\!\!+}}
	\mathrm{d} f.
\end{align*}


\subsection{Wiretapping Scenarios and Achievable Secure Rates}



\begin{figure*}
	\centering
	\includegraphics[scale=1.1]{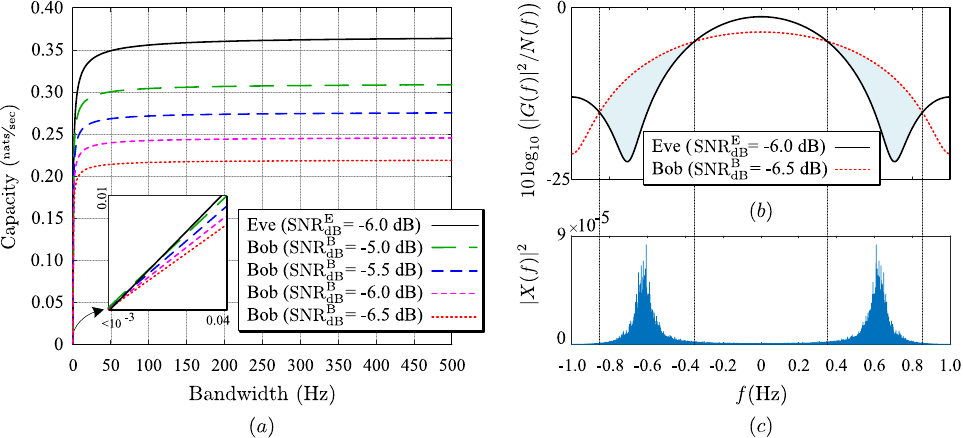}
	\caption{Results for Example~\ref{ex.scnd}, where $\gB(D)= 0.792+ 0.610D$,
		$\gE(D) = 0.446+0.633D+0.633D^2$, $\SNRdBE = -6.0~\text{dB}$. ($a$)~Unconstrained capacities of Bob's and Eve's P2P channels in
		$\mathrm{nats} / \mathrm{sec}$ with normalized average-energy constraint
		$\Es = 1~\mathrm{J}$. ($b$)~Gain-to-noise power spectrum ratios of Bob's
		and Eve's P2P channels in dB/Hz. ($c$)~The power spectrum
		of a sequence generated by the optimized input Markov source. 
		\mbox{}}
	\label{Fig.first.scenario}
\end{figure*}

\begin{figure*}
	\centering \includegraphics[scale=1.1]{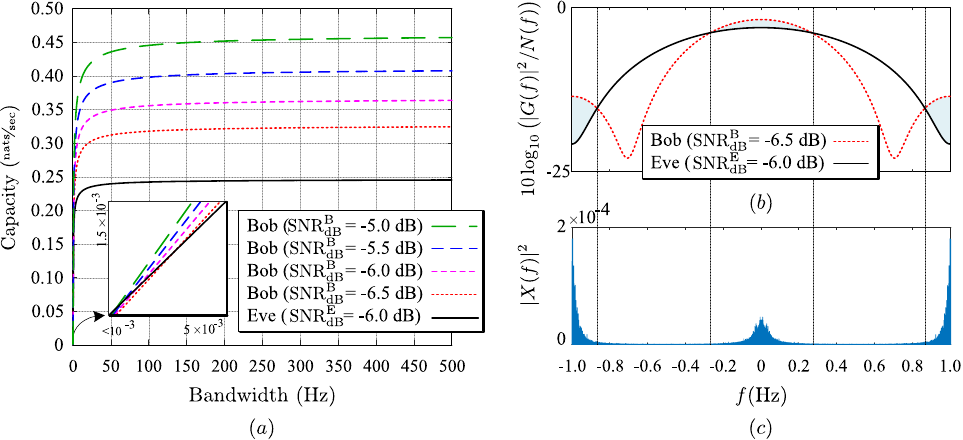}
	\caption{Results for Example~\ref{ex.thrd}, where
		$\gB(D) = 0.446+0.633D+0.633D^2$, $\gE(D) = 0.792 + 0.610D$,
		$\SNRdBE = -6.0~\text{dB}$. ($a$)~Unconstrained capacities of Bob's and
		Eve's P2P channels in $\mathrm{nats} / \mathrm{sec}$ with
		normalized average-energy constraint $\Es = 1~\mathrm{J}$. ($b$)~Gain-to-noise power spectrum ratios of Bob's and Eve's P2P
		channels in dB/Hz. ($c$)~The power spectrum of a sequence generated
		by the optimized input Markov source.}
	\label{Fig.second.scenario}
\end{figure*}


We examine Algorithm~\ref{alg.EM} for optimizing
the parameters of an input Markov source with the alphabet $\setX = \{ +\sqrt{\Es},
	-\sqrt{\Es} \}$ and the memory order $\nu=2$ at the input of two
different ISI-WTCs.\footnote{The BPSK modulation is proposed for the
	narrowband physical uplink shared channel (NPUSCH), both for data (NPUSCH
	Format~1) and control (NPUSCH Format~2)
	channels~\cite[Tab. 10.1.3.2-1]{3GPPTS36211}.} We consider a setup where
Bob's channel and Eve's channel have normalized transfer
polynomials\footnote{A normalized transfer polynomial
	$g(D) \defeq \sum_{t=0}^{m} g_t D^t \in \C[D]$ has to satisfy
	$\sum_{t=0}^{m} |g_t|^2 = 1$. (See, e.g.,~\cite{We-2003}.)} $\gB(D)$ and
$\gE(D)$, and additive white Gaussian noises of variances $\sigmaB^2$ and
$\sigmaE^2$, respectively. Accordingly, the signal-to-noise ratios~(SNRs) of
Bob's channel and Eve's channel are defined as, respectively,
$\SNRB \defeq \Es / \sigmaB^2$ and $\SNRE \defeq \Es / \sigmaE^2$.\footnote{If
	desired, these SNR values can be re-expressed in terms of $\Es/N_0$ values,
	where $N_0/2$ is the two-sided power spectral density of the AWGN process:
	$\Es/N_0 = \frac{1}{2} \cdot (\Es / \sigma^2)$.}

\begin{example}
	\label{ex.scnd}
	
	In the first scenario, Bob's channel is assumed to be the ISI channel
	derived from Table~\ref{tab:simscat}, i.e., $$\gB(D)= 0.792+ 0.610D.$$ Also,
	Eve's channel is assumed to be another ISI channel with the same delay
	profile as in Table~\ref{tab:simscat}, but with different tap
	coefficients. Since it is challenging for Eve to intercept the transmitted
	signals from the line-of-sight transmission~\cite{Ma-2018}, the relative amplitude of Eve's
	direct path is assumed to be (at least) $2.5$~dB below Bob's direct
	path. However, the other tap coefficients are then assumed to be such that
	Eve's channel has the highest unconstrained capacity among all ISI channels
	satisfying the delay profile of Table~\ref{tab:simscat}, i.e.,
	$$(\gE_t)_{t=0}^2=\arg\max_{\tgE: \ |\tgE_0|\leq |\gB_0|-2.5\mathrm{dB}}
	C(\tgE,W).$$ Solving this problem for $\gB_0=0.792$ leads to
	$$\gE(D)=0.446+0.633D+0.633D^2.$$
	
	The resulting unconstrained capacities of
	Bob's and Eve's P2P channels are depicted in
	Fig.~\ref{Fig.first.scenario}($a$).\footnote{Since the NB-IoT protocol
		promises to provide reliable connections with low power consumption, we
		consider low-SNR regimes both for Bob's channel and Eve's
		channel~\cite{8922625}.} It can be seen from
	Fig.~\ref{Fig.first.scenario}($a$) that Eve's channel has a higher
	unconstrained capacity than Bob's channel for sufficiently large enough
	bandwidth. In this sense, Bob's channel is ``worse'' than Eve's
	channel. However, luckily for Bob, there are frequencies where Bob's channel
	has a better gain-to-noise power spectrum ratio than Eve's channel, as can
	be seen from Fig.~\ref{Fig.first.scenario}($b$).  These spectral
	discrepancies can be exploited by a suitably tuned input source toward
	obtaining positive secure rates. Fig.~\ref{Fig.first.scenario}($c$)
		shows the power spectrum of a sequence with the length of $10^6$ generated by the optimized
			Markov source, where the optimization was done with the help of
		Algorithm~\ref{alg.EM}.
		It can be seen from Fig.~\ref{Fig.first.scenario}($c$)
		that the optimized Markov source concentrates the available power of the generated input sequence
		in frequency ranges where Bob's channel has a higher gain-to-noise power
		spectrum ratio than Eve's channel.
	
	Fig.~\ref{Fig.first.CoSecCap} shows the obtained secure rates: on the
	one hand for an \emph{unoptimized} Markov source, producing
	independent and uniformly distributed (i.u.d.)\ symbols, and, on the other
	hand, for an \emph{optimized} Markov source. In this plot, the best obtained secure rate
	is plotted after running Algorithm~\ref{alg.EM} for $100$ different
	initializations.\footnote{The parameters $\kappa$ and $\kappa'$ in
		Algorithm~\ref{alg.EM} took values in the ranges $0.9\leq \kappa \leq 1.0$
		and $4\leq \kappa' \leq 6$, respectively. The initializations
		were generated with the help of Weyl's $|\setS|$-dimensional
		equi-distributed sequences~\cite{RePEc:mtp:titles:0262100711}. (Simulation
		files are available online~\cite{ISIWTC_SF}.)}  \exampleend
\end{example}





In Example~\ref{ex.thrd}, we consider the same
scenario as in Example~\ref{ex.scnd}, but where Bob's channel is swapped with
Eve's channel. For comparison, note that in a \emph{memoryless} wiretap channel
setup, if the first scenario is such that positive secure rates are possible,
then in the second scenario, i.e., after swapping Bob's channel with Eve's channel,
the secure rate is zero~\cite{LeungYanCheong:Hellman:78:1}.

\begin{figure*}[t!]
	\begin{minipage}[b]{.48\textwidth}
		\centering
		\includegraphics[width=1\textwidth]{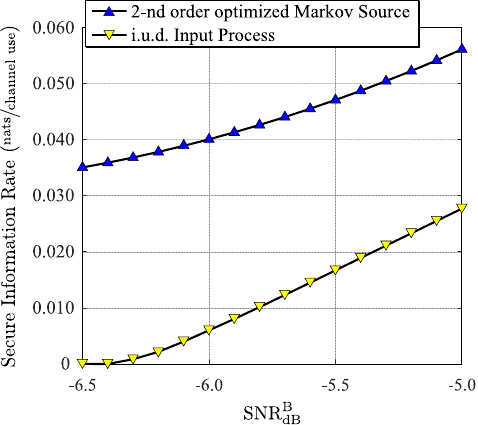}
		\caption{Example~\ref{ex.scnd}:
			Secure rates
			achieved by various input processes in nats/channel use. \\ \mbox{}}
		\label{Fig.first.CoSecCap}
	\end{minipage}
	\hfill
	\begin{minipage}[b]{.48\textwidth}
		\centering
		\includegraphics[width=1\textwidth]{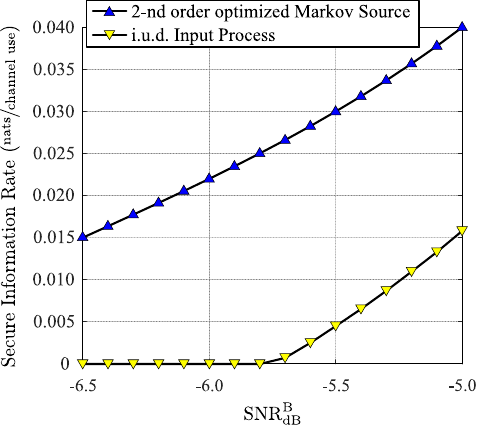}
		\caption{Example~\ref{ex.thrd}:
			Secure rates
			achieved by various input processes in nats/channel use. \\ \mbox{}}
		\label{Fig.second.CoSecCap}
	\end{minipage}
\end{figure*}

\begin{example}
	\label{ex.thrd} 
	
	In the second scenario, the roles of the receiver terminals in
	Example~\ref{ex.scnd} are swapped, i.e.,
	\begin{align*}
		\gB(D)&=0.446+0.633D+0.633D^2,\\
		\gE(D)&=0.792+ 0.610D.
	\end{align*}
	In this case, Bob's channel has a higher
	unconstrained capacity than Eve's channel for large enough
	bandwidth (see Fig.~\ref{Fig.second.scenario}). In this sense, it is not
	unexpected that positive secure rates are possible. Nevertheless, it is
	worthwhile to point out that here positive secure rates are possible even
	though Bob's channel has larger memory than Eve's channel, and for some
	selections of $\SNRdBB$, higher noise power than Eve's channel (see
	Fig.~\ref{Fig.second.CoSecCap}).  \exampleend
\end{example}


\subsection{Discussion}
In a memoryless wiretap channel setup, Eve's channel necessarily has to
be noisier than Bob's channel to achieve a positive secrecy
capacity~\cite{LeungYanCheong:Hellman:78:1}. This results in the capacity of
Eve's channel being less than the capacity of Bob's channel. Interestingly
enough, the optimized Markov sources achieved positive secure rates over
the ISI-WTCs, (i) even when the unconstrained capacity of Bob's channel is
smaller than the unconstrained capacity of Eve's channel (as pointed out in
Example~\ref{ex.scnd}), (ii) even when Bob's channel tolerates both a higher
noise power and a larger memory compared with Eve's channel (as pointed out in
Example~\ref{ex.thrd}). These results confirm the feasibility of
	optimizing input Markov sources for shaping the available power of the
	generated sequences toward benefiting from the spectral discrepancies of
	Bob's and Eve's P2P channels---without consuming any extra power
for cooperative jamming or injecting artificial noise (as it was done
in~\cite{8464866},~\cite{8737786}).


\section{Conclusion}
\label{sec::conclu}


In this paper, we have derived a lower bound on the achievable secure rates
over ISI-WTCs. Then, we have optimized a Markov source at the input of an ISI-WTC
toward (locally) maximizing the obtained secure rates. Because directly
maximizing the secure rate function is challenging, we have iteratively
approximated the secure rate function by concave surrogate functions whose
maximum can be found efficiently. Our numerical results show that by
implicitly using the discrepancies between the frequency responses of Bob's
channel and Eve's channel, it is possible to achieve positive secure rates
also for setups where the unconstrained capacity of Eve's channel is larger
than the unconstrained capacity of Bob's channel.


\appendices


\section{Secrecy Criterion}
\label{apndx.frst}

This appendix gives a concise discussion about the employed secrecy criterion.
	Let $\Mn$ be a random variable corresponding to a uniformly chosen secret
	message from an alphabet $\setMn$. (Note that $\setMn$ implicitly depends on
	the block length $n$.) Moreover, recall that the sequence observed by Eve is
	denoted by $\vZ_1^n$ (see Fig.~\ref{FIG:BLK_diag}). The statistical dependence
	between $\Mn$ and $\vZ_1^n$ is often measured in terms of the mutual
	information between $\Mn$ and $\vZ_1^n$ to ensure the information-theoretic
	perfect secrecy. For instance, the so-called strong secrecy
	criterion~\cite{Maurer1994} requires $I(\Mn; \vZ_1^n) \to 0$ and the so-called
	weak secrecy criterion~\cite{6772207} requires
	$\frac{1}{n} I(\Mn; \vZ_1^n) \to 0$ as $n\to\infty$.
	(See also the recent survey~\cite{9380147}.)

On one hand, the weak secrecy criterion is easier to achieve, but it might
	lead to coding schemes that are vulnerable for practical
	purposes~\cite[Ch. 3.3]{Bloch:2011:PSI:2829193}. On the other hand, the strong
	secrecy criterion is much more desirable, but very difficult to achieve with
	practical coding schemes~\cite{6612711}. Therefore, in the following, we will
	use a secrecy criterion that is stronger than the weak secrecy criterion, but
	more easily achieved than the strong secrecy
	criterion~\cite[Proposition~1]{6612711}. Namely, we use the secrecy criterion \eqref{equ:seccrit},
	 based on the variational distance
	$d_{\setMn \times \setZ^n}(p_{\Mn, \vZ_1^n}, p_{\Mn} p_{\vZ_1^n})$,
	which was called $\mathbb{S}_2(p_{\Mn, \vZ_1^n}, p_{\Mn} p_{\vZ_1^n})$
	in~\cite{6612711}. This secrecy measure can be bounded as
	\begin{align*}
		&d_{\setMn \times \setZ^n}(p_{\Mn, \vZ_1^n}, p_{\Mn} p_{\vZ_1^n})\\
			&=
			\int_{\vz_1^n\in\setZ^n}
				\sum_{\mn\in\setMn}
				p_{\Mn}(\mn)\cdot
				\Big|
			 p_{\vZ_1^n|\Mn}(\vz_1^n|\mn)
				-
			 p_{\vZ_1^n}(\vz_1^n)
				\Big|\mathrm{d}\vz_1^n\\
		&=\int_{\vz_1^n\in\setZ^n}
		\sum_{\mn\in\setMn}
		p_{\Mn}(\mn)\cdot
		\Big|
		p_{\vZ_1^n|\Mn}(\vz_1^n|\mn)\\
		&\qquad\quad-
		\sum_{\tilde \mn \in\setMn}
		p_{\vZ_1^n,\Mn}(\vz_1^n,\tilde \mn)
		\Big| \ \mathrm{d}\vz_1^n\\
		&\leq
		\sum_{(\mn,\tilde \mn)\in \setMnsq}
		p_{\Mn}(\mn)\cdot p_{\Mn}(\tilde \mn)
		\cdot
		\int_{\vz_1^n\in\setZ^n}
		\Big|
		p_{\vZ_1^n|\Mn}(\vz_1^n|\mn)\\
		&\qquad\quad-
		p_{\vZ_1^n|\Mn}(\vz_1^n|\tilde \mn)
		\Big| \ \mathrm{d}\vz_1^n\\
		&=
		\sum_{(\mn,\tilde \mn)\in \setMnsq}
		p_{\Mn}(\mn)\cdot p_{\Mn}(\tilde \mn)\cdot
		d_{\setZ^n}(p_{\vZ_1^n|\Mn=\mn}, p_{\vZ_1^n|\Mn=\tilde \mn}),
		\alabel{equ:indstngshbl}
	\end{align*}
	where the inequality follows from the triangle inequality.
	It follows from (\ref{equ:indstngshbl}) that satisfying (\ref{equ:seccrit})
	makes $\mn,\tilde \mn\in\setMn$ statistically (almost) indistinguishable at
	Eve's decoder.

For further context, note that the secrecy criterion in (\ref{equ:seccrit}) is
	weaker than the so-called distinguishing secrecy criterion in
	cryptography~\cite{10.1007/978-3-642-32009-5_18}, which requires
	$$\max\limits_{(\mn,\tilde \mn)\in
		\setMnsq}\!\big(d_{\setZ^n}(p_{\vZ_1^n|\Mn=\mn}, p_{\vZ_1^n|\Mn=\tilde
		\mn})\big) \to 0,$$ 
		as $n\to\infty$, and which is equivalent to the so-called semantic secrecy criterion.\footnote{Semantic secrecy
		criterion requires that it is impossible for Eve to estimate any function of $\Mn$
		better than to guess it without considering
		$\vZ_1^n$~\cite{10.1007/978-3-642-32009-5_18}.}
 As a consequence, satisfying
	(\ref{equ:seccrit}) gives rise to a loosened notion of the semantic
	security. This looseness arises from the extra assumption that
		$p_{\Mn}$ is fixed and known,  contrary to the cryptographically
		relevant secrecy criteria.\footnote{Generally, from the
			information-theoretic perspective, we assume that a universal source
			encoder is used to compress the data source before data transmission,
			resulting in a sequence that is arbitrarily close to uniformly
			distributed~\cite{1386546}.}


\section{Proof of Proposition~\ref{Prop:achiev}}
\label{apndx.scnd}

We start by defining the notations that will be used in this appendix.
	The \emph{mutual information density} between the respective
	realizations of random variables $X$ and $Y$ is defined to be
	\begin{align*}
		i(x;y)
		\defeq
		\log
		\left(
		\frac{p_{X,Y}(x,y)}
		{p_X(x) \cdot p_{Y}(y)}
		\right).
	\end{align*}
	Moreover, the \emph{conditional mutual information density} between
	the respective realizations of random variables $X$ and $Y$ given $Z=z$
	is defined to be
	\begin{align*}
		i(x;y|z)
		\defeq
		\log
		\left(
		\frac{p_{X,Y|Z}(x,y|z)}
		{p_{X|Z}(x|z) \cdot p_{Y|Z}(y|z)}
		\right).
	\end{align*}
	Consequently, we have
	\begin{align*}
		I(X;Y) 
		&= \sum_{x,y} 
		p_{X,Y}(x,y) \cdot i(x;y), \\
		I(X;Y|Z) 
		&= \sum_{x,y,z} 
		p_{X,Y,Z}(x,y,z) \cdot i(x;y|z).
	\end{align*}
	Following~\cite{spectral}, the spectral sup/inf-mutual information rates are
	defined to be
	\begin{align*}
		\plimsup{n\to\infty}
		&\frac{1}{n}{i}(\vX_1^n;\vY_1^n)\\
		&\defeq
		\inf \left\{ \alpha: \lim_{n\to\infty} 
		\Prob 
		\left( 
		\frac{1}{n}{i}(\vX_1^n;\vY_1^n) > \alpha \right) = 0 
		\right\},\\
		\pliminf{n\to\infty}
		&\frac{1}{n}{i}(\vX_1^n;\vY_1^n)\\
		&\defeq
		\sup \left\{ \beta: \lim_{n\to\infty} 
		\Prob 
		\left(
		\frac{1}{n}{i}(\vX_1^n;\vY_1^n) < \beta \right) = 0 
		\right\}.
	\end{align*}


According to \cite[Lemma~2]{4797642},
	for an arbitrary wiretap channel $\bigl( \setX$,
	$\{ p_{\vY_1^n, \vZ_1^n | \vX_1^n}(\vy_1^n, \vz_1^n | \vx_1^n)
	\}_{n=1}^\infty$, $\setY$, $\setZ\bigr)$ consisting of an arbitrary input
	alphabet $\setX$, two arbitrary output alphabets $\setY$ and $\setZ$
	corresponding to Bob's and Eve's observations, respectively, and a
	sequence of transition probabilities $\{p_{\vY_1^n, \vZ_1^n | \vX_1^n}
	(\vy_1^n, \vz_1^n | \vx_1^n) \}_{n=1}^\infty$, all secure rates $\Rs$
	satisfying
	\begin{align*}
		\Rs
		&\!\! < \!\!\!
		\max_{\{\vX_1^n\}_{n=1}^{\infty}}
		\!\!\!
		\left( \!
		\pliminf{n\to\infty}\!
		\frac{1}{n}{i}(\vX_1^n;\vY_1^n)
		-
		\plimsup{n\to\infty}\!
		\frac{1}{n}{i}(\vX_1^n;\vZ_1^n)
		\! \right)^{\!\!\!+}
	\end{align*}
	are achievable under the reliability criterion \eqref{equ:relcrit}
	and the secrecy criterion \eqref{equ:seccrit}. We leverage \cite[Lemma~2]{4797642} for deducing a lower bound
	on the achievable secure rates over ISI-WTCs.\footnote{Note that
		since ISI channels are indecomposable FSMCs~\cite{Gallager:1968:ITR:578869}
		(i.e., the effect of an initial state vanishes over time), the information
		rates are well-defined even if the initial state is unknown.}


Consider an ISI-WTC as in Definition~\ref{def:prwt:channel:1}. For all positive
integers $\ell$ and $\nu\geq\max(\mB,\mE)$, let $\bigl\{ \vX_{k(\ell+2\nu)-\nu+1}^{k(\ell+2\nu)+(\ell+\nu)}
\bigr\}_{k=-\infty}^{+\infty}$ be a block i.i.d.\ process where each block has
length $\ell + 2\nu$. So, it suffices to specify the distribution of a single
block $\vX_{k(\ell+2\nu)-\nu+1}^{k(\ell+2\nu)+(\ell+\nu)}$. In order to ensure
that there is no interference across blocks, we set
\begin{align*}
  X_{k(\ell+2\nu)+(\ell+1)}
    &\defeq
       0,
  \quad \ldots, \quad
  X_{k(\ell+2\nu)+(\ell+\nu)}
    \defeq
       0,
\end{align*}
while allowing
$\vX_{k(\ell+2\nu)-\nu+1}^{k(\ell+2\nu)+\ell}$ to be arbitrarily
distributed. Obviously, 
\begin{align*}
	\bigl\{
	\vX_{k(\ell+2\nu)-\nu+1}^{k(\ell+2\nu)+(\ell+\nu)},
	\vY_{k(\ell+2\nu)-\nu+1}^{k(\ell+2\nu)+(\ell+\nu)}
	\bigr\}_{k=-\infty}^{+\infty}
\end{align*}
is a joint block i.i.d.\ process. Similarly, \\
\begin{align*}
	\bigl\{
	\vX_{k(\ell+2\nu)-\nu+1}^{k(\ell+2\nu)+(\ell+\nu)},
	\vZ_{k(\ell+2\nu)-\nu+1}^{k(\ell+2\nu)+(\ell+\nu)}
	\bigr\}_{k=-\infty}^{+\infty}
\end{align*}
is also a joint block i.i.d.\ process. Let 
\begin{align*}
	\{X_t,Y_t\}_{t=1}^n
	= \bigl\{
	\vX_{k(\ell+2\nu)-\nu+1}^{k(\ell+2\nu)+(\ell+\nu)},
	\vY_{k(\ell+2\nu)-\nu+1}^{k(\ell+2\nu)+(\ell+\nu)}
	\bigr\}_{k=1}^{n\sp\prime\!}, 
\end{align*}
where $n\sp\prime$ denotes the number of i.i.d. blocks in $\{X_t\}_{t=1}^n$.
Then,
\begin{align*}
	&\lim_{n\to\infty}\frac{1}{n}i(\vX_1^n;\vY_1^n)\\
	&=n\sp\prime\cdot\lim_{n\to\infty}
	\frac{1}{n}
	\left(\!
	\frac{1}{n\sp\prime}\sum_{k=1}^{n\sp\prime\!}
	i\big(\vX_{k(\ell+2\nu)-\nu+1}^{k(\ell+2\nu)+(\ell+\nu)};
	\vY_{k(\ell+2\nu)-\nu+1}^{k(\ell+2\nu)+(\ell+\nu)}\big)\!\!\right)\\
	&=\frac{1}{\ell+2\nu}
	I(\vX_{-\nu+1}^{\ell+\nu}; \vY_{-\nu+1}^{\ell+\nu}), \qquad \text{w.p.\ $1$,}\alabel{equ:apn:prp1:lln1}
\end{align*}
where the second equality follows from the strong law of large
numbers and $n=n\sp\prime\cdot(\ell+2\nu)$. With an analogous manipulation, we
have
\begin{equation}\label{equ:apn:prp1:lln2}
	\lim_{n\to\infty}
	\frac{1}{n}
	i(\vX_1^n;\vZ_1^n)
	=\frac{1}{\ell+2\nu}
	I(\vX_{-\nu+1}^{\ell+\nu}; \vZ_{-\nu+1}^{\ell+\nu}),
	\qquad\text{w.p.~$1$}.
\end{equation}
Note
\begin{align*}
  I(\vX_{-\nu+1}^{\ell+\nu}; \vY_{-\nu+1}^{\ell+\nu})
    &\geq 
       I(\vX_{-\nu+1}^{\ell}; \vY_1^{\ell}) \\
    &= I(\vX_{-\nu+1}^{0}; \vY_1^\ell)
       +
       I(\vX_1^{\ell}; \vY_1^\ell | \vX_{-\nu+1}^0) \\
    &\geq I(\vX_1^{\ell}; \vY_1^{\ell} | \vX_{-\nu+1}^0).
\end{align*}
Moreover, 
\begin{align*}
  \hspace{0.5cm}&\hspace{-0.5cm}
  I(\vX_{-\nu+1}^{\ell+\nu}; \vZ_{-\nu+1}^{\ell+\nu}) \\
    &= I(\vX_{-\nu+1}^{\ell}; \vZ_{-\nu+1}^{\ell+\nu}) \\
    &= I(\vX_{-\nu+1}^{\ell}; \vZ_1^{\ell})
       +
       I(\vX_{-\nu+1}^{\ell}; \vZ_{-\nu+1}^{0} | \vZ_1^{\ell}) \\
    &\quad\ 
       + 
       I(\vX_{-\nu+1}^{\ell}; \vZ_{\ell+1}^{\ell+\nu}|\vZ_{-\nu+1}^{\ell}) \\
    &= I(\vX_1^{\ell}; \vZ_1^{\ell} | \vX_{-\nu+1}^{0})
       +
       I(\vX_{-\nu+1}^{0}; \vZ_1^{\ell}) \\
    &\quad\ 
       + I(\vX_{-\nu+1}^{\ell}; \vZ_{-\nu+1}^{0} | \vZ_1^{\ell})
       + I(\vX_{-\nu+1}^{\ell}; \vZ_{\ell+1}^{\ell+\nu} | \vZ_{-\nu+1}^{\ell}) \\
    &= I(\vX_1^{\ell}; \vZ_1^{\ell} | \vX_{-\nu+1}^{0})
       +
       I(\vX_{-\nu+1}^{0}; \vZ_1^{\ell}) \\
    &\quad\ 
       + 
       I(\vX_{-\nu+1}^{0}; \vZ_{-\nu+1}^{0} | \vZ_1^{\ell})
       +
       I(\vX_{\ell-\nu+1}^{\ell}; \vZ_{\ell+1}^{\ell+\nu} | \vZ_{-\nu+1}^{\ell}) \\
    &\leq 
       I(\vX_1^{\ell}; \vZ_1^{\ell} | \vX_{-\nu+1}^{0})
       +
       3\nu\log|\setX|.
\end{align*}
Combining \cite[Lemma~2]{4797642} with \eqref{equ:apn:prp1:lln1}, \eqref{equ:apn:prp1:lln2}, and the above lower and upper bounds
implies that all secure rates $\Rs$ satisfying
\begin{align*}
	\Rs
	&< \frac{1}{\ell+2\nu}
	\Big(
	I(\vX_1^{\ell};\vY_1^{\ell}|\vX_{-\nu+1}^{0})
	-
	I(\vX_1^{\ell};\vZ_1^{\ell}|\vX_{-\nu+1}^{0})  \\
	&\hspace{2cm}
	-
	3\nu \cdot \log|\setX|
	\Big)^+\alabel{inequ:achiev}
\end{align*}
are achievable under the reliability criterion~\eqref{equ:relcrit} and the secrecy criterion~\eqref{equ:seccrit}.

Let $\nu$ be the memory order of an FSMC associated with the considered ISI-WTC and let $\matrQ\in\setQ(\setB)$ be the
parameter of the input Markov source. Define $$\Rs(\matrQ)\defeq\lim_{n\to\infty}\frac{1}{n}\Big(I(\vS_1^n; \vY_1^n | S_0)-I(\vS_1^n; \vZ_1^n | S_0)\Big)^+.$$
It is easy to verify
\begin{align*}
I(\vX_1^n; \vY_1^n | \vX_{-\nu+1}^{0})
&= I(\vS_1^n; \vY_1^n | S_0),\\
I(\vX_1^n; \vZ_1^n | \vX_{-\nu+1}^{0})
&=I(\vS_1^n; \vZ_1^n | S_0).
\end{align*}
By letting $n\to\infty$ and invoking \eqref{inequ:achiev}, all secure rates $\Rs$ satisfying $\Rs<\Rs(\matrQ)$ are achievable.
Finally, reformulating the expression of $\Rs(\matr{Q})$ as follows proves the promised result.
	\begin{alignat*}{1}
		\Rs(\matrQ)
		&= \lim_{n\to\infty}
		\frac{1}{n}
		\sum_{t=1}^n
		\bigl(
		I(S_t; \vY_1^n | \vS_0^{t-1})
		-
		I(S_t; \vZ_1^n | \vS_0^{t-1})
		\bigr)\\
		&= \lim_{n\to\infty}
		\frac{1}{n}
		\sum_{t=1}^n
		\bigl(
		I(S_t; \vY_1^n | S_{t-1})
		-
		I(S_t; \vZ_1^n | S_{t-1})
		\bigr) \\
		&= \lim_{n\to\infty}
		\frac{1}{n}
		\sum_{t=1}^n
		\bigl(
		H(S_t | \vZ_1^n, S_{t-1})
		-
		H(S_t | \vY_1^n, S_{t-1})
		\bigr)\\
		&= \sum_{(i,j) \in \setB}
		Q_{ij}
		\cdot
		\bigl(
		\TB_{ij}(\matrQ)
		-
		\TE_{ij}(\matrQ)
		\bigr),
	\end{alignat*}
where the $(\cdot)^+$ operator has been omitted for clarity of the presentation and where the last equality is based on expressing $H(S_t|\vY_1^n,S_{t-1})$ as \eqref{equ:eqvv:Tij} at the top of this page,
	\bigformulatop{18}{19}{
		\begin{align*}
		H(S_t|\vY_1^n,S_{t-1})  &=-
		\sum_{(i,j)\in\setB}
		{\int_{\vy_1^n \in \setY^n}}
		p_{S_t, S_{t-1}, \vY_1^n}
		(j, i, \vy_1^n)
		\cdot 
		\log
		\bigl(
		p_{S_t | S_{t-1}, \vY_1^n}
		(j | i, \vy_1^n)
		\bigr){\mathrm{d}\vy_1^n} \nonumber \\
		&= -
		\sum_{(i,j)\in\setB}
		{\int_{\vy_1^n \in \setY^n}}
		p_{S_t, S_{t-1}, \vY_1^n}
		(j, i, \vy_1^n)
		\cdot 
		\left(
		\log
		\left(
		\frac{p_{S_t, S_{t-1}, \vY_1^n}
			(j, i, \vy_1^n)}
		{p_{\vY_1^n}(\vy_1^n)}
		\right)
		-
		\log
		\left(
		\frac{p_{S_{t-1}, \vY_1^n}
			(i, \vy_1^n)}
		{p_{\vY_1^n}(\vy_1^n)}
		\right)
		\right){\mathrm{d}\vy_1^n} \nonumber \\
		&= -
		\sum_{(i,j)\in\setB}
		\mu_i 
		p_{ij}
		\cdot
		{\int_{\vy_1^n \in \setY^n}}
		\Bigl(
		p_{\vY_1^n | S_{t-1}, S_t}
		(\vy_1^n | i, j)
		\cdot
		\log
		\bigl(
		p_{S_{t-1}, S_t|\vY_1^n}
		(i, j | \vy_1^n)
		\bigr) \\
		&\hspace{6cm}
		- 
		p_{\vY_1^n | S_{t-1}}
		(\vy_1^n | i)
		\cdot
		\log
		\bigl(
		p_{S_{t-1} | \vY_1^n}
		(i | \vy_1^n)
		\bigr)
		\Bigr){\mathrm{d}\vy_1^n} \nonumber \\
		&= -
		\sum_{(i,j)\in\setB}
		\mu_i
		p_{ij}
		\cdot
		{\int_{\vy_1^n \in \setY^n}}
		\Biggl(
		\frac{p_{S_{t-1}, S_t | \vY_1^n}
			(i, j | \vy_1^n)}
		{\mu_i p_{ij}}
		\cdot
		p_{\vY_1^n}(\vy_1^n)
		\cdot
		\log
		\bigl(
		p_{S_{t-1}, S_t | \vY_1^n}
		(i, j | \vy_1^n)
		\bigr) \nonumber \\
		&\hspace{6cm}
		-
		\frac{p_{S_{t-1} | \vY_1^n}
			(i | \vy_1^n)}
		{\mu_i}
		\cdot
		p_{\vY_1^n}(\vy_1^n)
		\cdot
		\log
		\bigl(
		p_{S_{t-1} | \vY_1^n}(i | \vy_1^n)
		\bigr)
		\Biggr){\mathrm{d}\vy_1^n} \nonumber \\
		&= -
		\sum_{(i,j)\in\setB}
		\mu_i
		p_{ij}
		\cdot
		\left(
		{\int_{\vy_1^n \in \setY^n}}
		p_{\vY_1^n}(\vy_1^n)
		\cdot
		\log
		\left(
		\frac{p_{S_{t-1}, S_t | \vY_1^n}
			(i, j | \vy_1^n)
			^{{p_{S_{t-1}, S_t | \vY_1^n}
					(i, j | \vy_1^n)} 
				/ \mu_i p_{ij}}}
		{p_{S_{t-1} | \vY_1^n}
			(i | \vy_1^n)
			^{{p_{S_{t-1} | \vY_1^n}
					(i | \vy_1^n)}
				/ \mu_i}}
		\right)
		\right){\mathrm{d}\vy_1^n}.\alabel{equ:eqvv:Tij}
	\end{align*}}
with an
analogous expression for $H(S_t|\vZ_1^n,S_{t-1})$, along with
using~\eqref{eq:def:TB:1} and~\eqref{eq:def:TE:1}.


\section{Proof of Lemma~\ref{lem.convexity}}\label{apndx.convexity}


Besides the assumptions on the parameterizations $\matrQ(\theta)$ made in
Remark~\ref{remark:Q:parameterized:family:1}, we will also assume that for all
$(i,j) \in \setB$, the functions $Q_{ij}(\theta)$ and $\mu_i(\theta)$ are
affine functions in terms of~$\theta$, which implies
$Q_{ij}^{\theta\theta}(\theta) = 0$ and $\mu_i^{\theta\theta}(\theta) =0$,
where the superscript $\theta\theta$ denotes the second-order derivative
w.r.t.\ $\theta$.


Denoting the second-order derivative of $\bpsi_{\tmatrQ}(\theta)$ by
$\bpsi_{\tmatrQ}^{\theta\theta}(\theta)$, we observe that the claim in the
lemma statement is equivalent to
$\bpsi_{\tmatrQ}^{\theta\theta}(\theta) \geq 0$ for all possible
parameterizations of $\matrQ(\theta)$ that satisfy the above-mentioned
conditions.


Let $\hQ_{ij} \defeq (1-\kappa) \cdot \tQ_{ij} + \kappa \cdot Q_{ij}$ and $\hmu_i \defeq (1-\kappa) \cdot \tmu_i + \kappa \cdot Q_{ij}$ for all $(i,j)\in\setB$. Some straightforward calculations show that
\begin{align*}
	\bpsi_{\tmatrQ}^{\theta\theta}(\theta) &= \kappa^2 \kappa' \cdot \left( \sum_{(i,j) \in \setB} \frac{(Q^{\theta}_{ij})^2}{\hat{Q}_{ij}} - \sum_{i \in \setS} \frac{(\mu_i^\theta)^2}{\hat{\mu}_i} \right)\\ &= \kappa^2 \kappa' \cdot \sum_{i \in \setS} \left( \left( \sum_{j \in \setSright{i}} \frac{(Q^{\theta}_{ij})^2}{\hat{Q}_{ij}} \right) - \frac{(\mu_i^\theta)^2}{\hat{\mu}_i} \right).
\end{align*}
Noting that for any $i \in \setS$ it holds that
\begin{align*} \sum_{j \in \setSright{i}} \frac{(Q^{\theta}_{ij})^2}{\hat{Q}_{ij}}  &= \hat{\mu}_i \cdot \sum_{j \in \setSright{i}} \frac{\hat{Q}_{ij}}{\hat{\mu}_i} \cdot \left( \frac{Q^{\theta}_{ij}}{\hat{Q}_{ij}} \right)^2\\ &\geq \hat{\mu}_i \cdot \left( \sum_{j \in \setSright{i}} \frac{\hat{Q}_{ij}}{\hat{\mu}_i} \cdot \frac{Q^{\theta}_{ij}}{\hat{Q}_{ij}} \right)^2\\ &= \frac{1}{\hat{\mu}_i} \cdot \left( \sum_{j \in \setSright{i}} Q^{\theta}_{ij} \right)^2 = \frac{(\mu_i^\theta)^2}{\hat{\mu}_i},
\end{align*}
where the inequality follows from Jensen's inequality, we can conclude that, indeed,
$\bpsi_{\tmatrQ}^{\theta\theta}(\theta) \geq 0$.


\section{Proof of Proposition~\ref{prop.DistEig}}\label{apndx.bee}


Maximizing $\psi_{\tmatrQ}(\matrQ)$ over $\matrQ \in \setQ(\setB)$ means to
optimize a differentiable, concave function over a polytope. We therefore set
up the Lagrangian
\begin{align*}
	L
	&\defeq \sum_{(i,j)\in\setB}
	Q_{ij} 
	\cdot
	\bigl(
	\tTB_{ij}
	-
	\tTE_{ij}
	\bigr)
	-
	\bpsi_{\tmatrQ}(\matrQ)\\
	&+ 
	\lambda 
	\cdot 
	\left(
	\sum_{(i,j)\in\setB}
	\!\!\! 
	Q_{ij}
	-
	1
	\right)
	+
	\sum_{(i,j)\in\setB}
	\!\!\! 
	\lambda_j
	Q_{ij}
	-
	\sum_{(i,j)\in\setB}
	\!\!\! 
	\lambda_i
	Q_{ij}.
\end{align*}
Note that at this stage we omit Lagrangian multipliers w.r.t.\ the
constraints $Q_{ij} \geq 0$, $(i,j) \in \setB$. We will make sure at a later
stage that these constraints are satisfied thanks to the choice of $\kappa$
in~\eqref{equ.kappalwrbound}.


Recall that we assume that the surrogate function takes on its maximal value
at $\matrQ = \matrQ^{*}$. Therefore, setting the gradient of $L$ equal to the
zero vector at $\matrQ = \matrQ^{*}$, we obtain
\begin{alignat}{2}
	\hspace{-0.15cm}
	0
	&= \left.
	\frac{\partial L}
	{\partial Q_{ij}}
	\right|_{\matrQ = \matrQ^*} \hspace{-0.50cm}
	&&= \tTB_{ij} 
	- 
	\tTE_{ij}
	-
	\left.
	\frac{\partial \bpsi_{\tmatrQ}(\matrQ)}
	{\partial Q_{ij}}
	\right|_{\matrQ = \matrQ^*} \hspace{-0.50cm}
	+
	\lambda^{*}
	+
	\lambda_j^{*}
	-
	\lambda_i^{*},
	\nonumber \\
	& &&\hspace{5.5cm} (i,j) \in \setB,
	\label{eq:Lagrangian:gradient:1} \\
	\hspace{-0.15cm}
	0
	&= \left.
	\frac{\partial L}
	{\partial \lambda}
	\right|_{\matrQ = \matrQ^*} \hspace{-0.25cm}
	&&= \sum_{(i,j) \in \setB}
	Q^{*}_{ij}
	-
	1, \nonumber \\
	\hspace{-0.15cm}
	0
	&= \left.
	\frac{\partial L}
	{\partial \lambda_i}
	\right|_{\matrQ = \matrQ^*} \hspace{-0.25cm}
	&&= \sum_{r \in \setSleft{i}}
	Q_{ri}^{*}
	-
	\sum_{j \in \setSright{i}}
	Q^{*}_{ij},
	\hspace{2.5cm} i \in \setS, \nonumber
\end{alignat}
where
\begin{align}
	&\left.
	\frac{\partial\bpsi_{\tmatrQ}(\matrQ)}{\partial Q_{ij}}
	\right|_{\matrQ = \matrQ^*}\nonumber\\
	&\,= \left.
	\kappa'
	\cdot
	\Bigl(
	\kappa
	{\cdot}
	\log
	\bigl(
	1
	+
	\kappa 
	{\cdot}
	(\deltaQ)_{ij} 
	\bigr)
	-
	\kappa
	{\cdot}
	\log
	\bigl(
	1
	+
	\kappa
	{\cdot}
	(\deltamu)_i
	\bigr)
	\Bigr)
	\right|_{\matrQ = \matrQ^*}
	\nonumber \\
	&\,= \kappa
	\cdot
	\kappa'
	\cdot
	\log
	\left(
	\frac{(1-\kappa) \cdot \tQ_{ij} + \kappa \cdot Q^{*}_{ij}}
	{(1-\kappa) \cdot \tmu_i 
		+
		\kappa \cdot \mu^{*}_i}
	\cdot
	\frac{\tmu_i}
	{\tQ_{ij}}
	\right) \nonumber \\
	&\,= \kappa
	\cdot
	\kappa'
	\cdot
	\log
	\left(
	\frac{\hQ^{*}_{ij}}
	{\hmu_i^{*}}
	\cdot
	\frac{\tmu_i}
	{\tQ_{ij}}
	\right) \nonumber \\
	&\,= \kappa
	\cdot
	\kappa'
	\cdot
	\log( \hp^{*}_{ij} )
	-
	\kappa
	\cdot
	\kappa'
	\cdot
	\log( \tp_{ij} ).
	\label{eq:psibar:derivative:1}
\end{align}
Here the third and the fourth equality use
$\{ \hQ^{*}_{ij} \}_{(i,j) \in \setB}$, which is defined by
\begin{align}
	\hQ^{*}_{ij}
	&\defeq
	(1-\kappa)
	\cdot
	\tQ_{ij}
	+
	\kappa 
	\cdot
	Q^{*}_{ij},
	\quad (i,j) \in \setB,
	\label{eq:hat:Q:star:1}
\end{align}
along with $\{ \hmu^{*}_i \}_{i \in \setS}$ and
$\{ \hp_{ij}^{*} \}_{(i,j) \in \setB}$, which are derived from
$\{ \hQ^{*}_{ij} \}_{(i,j) \in \setB}$ in the usual manner. {Note that
	$\hmu^{*}_i \defeq \sum_{j' \in \setSright{i}} \hQ^{*}_{ij'} = (1-\kappa)
	\cdot \tmu_i + \kappa\cdot \mu_i^{*}$, for all $i \in \setS$, and}
\begin{align}
	\hp^{*}_{ij}
	&= \frac{\hQ^{*}_{ij}}
	{\hmu^{*}_i}
	= \frac{(1-\kappa) \cdot \tQ_{ij} + \kappa \cdot Q^{*}_{ij}}
	{(1-\kappa) \cdot \tmu_i + \kappa \cdot \mu_i^{*}} \nonumber \\
	&= \frac{(1-\kappa) \cdot \tQ_{ij} + \kappa \cdot Q^{*}_{ij}}
	{(1-\kappa) \cdot \tmu_i 
		+ 
		\kappa \cdot \sum_{j' \in \setSright{i}}Q_{ij'}^{*}},
	\quad (i,j) \in \setB.
	\label{eq:hat:p:star:1}
\end{align}
Note also that solving~\eqref{eq:hat:Q:star:1} for $Q^{*}_{ij}$ results in
\begin{align*}
Q^{*}_{ij}
= \frac{1}{\kappa}
\cdot
(
\hQ^{*}_{ij}
-
\tQ_{ij}
+
\kappa
\cdot
\tQ_{ij}
), \quad 
(i,j) \in \setB,
\end{align*}
which shows that $Q^{*}_{ij} \geq 0$, $(i,j) \in \setB$, for $\kappa$
satisfying~\eqref{equ.kappalwrbound}. (Recall that when setting up the
Lagrangian, we omitted the Lagrange multipliers for the constraints
$Q_{ij} \geq 0$, $(i,j) \in \setB$; therefore we have to verify that the
solution satisfies these constraints, which it does indeed.)


Combining~\eqref{eq:Lagrangian:gradient:1} and~\eqref{eq:psibar:derivative:1},
and solving for $\hp^{*}_{ij}$ results in
\begin{align*}
	\hp^{*}_{ij}
	= \tp_{ij}
	\cdot
	\exp
	\left(
	\frac{
		\tTB_{ij}
		-
		\tTE_{ij}
		+
		\lambda^{*}
		+
		\lambda_j^{*}
		-
		\lambda_i^{*}
	}
	{\kappa \kappa'}
	\right), \quad
	(i,j) \in \setB.
\end{align*}
Using~\eqref{equ.Aij} and defining
$\rho \defeq \exp\bigl( - \frac{\lambda^{*}}{\kappa \kappa'} \bigr)$ and
$\vgamma = \bigl(\gamma_i\defeq \exp\bigl( \frac{\lambda_i^{*}}{\kappa
	\kappa'} \bigr)\bigr)_{i \in \setS}$, we rewrite this equation
as
 \begin{align*}
   \hp^{*}_{ij}
      = \frac{A_{ij}}
             {\rho}
        \cdot
        \frac{\gamma_j}
             {\gamma_i},
             \quad
         (i,j) \in \setB.
 \end{align*}
Because $\sum_{j \in \setSright{i}} \hp^{*}_{ij} = 1$ for all $i \in \setS$,
summing both sides of this equation over $j \in \setSright{i}$ results in
\begin{align*}
	1
	= \sum_{j \in \setSright{i}}
	\frac{A_{ij}}
	{\rho}
	\cdot
	\frac{\gamma_j}
	{\gamma_i},
	 \quad
	i \in \setS,
\end{align*}
or, equivalently,
\begin{align*}
	\rho
	\cdot 
	\gamma_i
	= \sum_{j \in \setSright{i}}
	A_{ij}
	\cdot
	\gamma_j,
	 \quad 
	i \in \setS.
\end{align*}
This system of linear equations can be written as
\begin{align*}
	\matrA \cdot \vgamma 
	= \rho \cdot \vgamma.
\end{align*}
Clearly, this equation can only be satisfied if $\vgamma$ is an eigenvector of
$\matrA$ with corresponding eigenvalue $\rho$. A slightly lengthy calculation
(which is somewhat similar to the calculation in~\cite[Eq.~(51)]{4494705})
shows that
\begin{align}
\psi_{\tmatrQ}(\matrQ^{*})
= \log(\rho).
\label{eq:maximum:of:surrogate:function:1}
\end{align}
Clearly, in order to maximize the right-hand side of~\eqref{eq:maximum:of:surrogate:function:1}
over all eigenvalues of
$\matrA$, the eigenvalue $\rho$ has to be the Perron--Frobenius eigenvalue and
$\vgamma$ the corresponding eigenvector.


The proof is concluded by noting that~\eqref{eq:hat:p:star:1} can be rewritten
as the system of linear equations
 \begin{align*}
   Q^{*}_{ij}
   -
   \hp_{ij}^{*}
     \cdot
     \sum_{j' \in \setSright{i}}
       Q_{ij'}^{*}
      = \frac{1-\kappa}
             {\kappa}
        \cdot
        \bigl(
          \tmu_i \hp^{*}_{ij}
          -
          \tQ_{ij}
        \bigr), \quad
     (i,j) \in \setB,
 \end{align*}
which can be used to determine $\{ Q^{*}_{ij} \}_{(i,j) \in \setB}$, because
all other quantities appearing in these equations are either known or have
already been calculated.


\bibliographystyle{IEEEtran}
\bibliography{citation}\vspace*{12pt}

\ifCLASSOPTIONcaptionsoff
\newpage
\fi
\vskip -2\baselineskip plus -1fil
\begin{IEEEbiography}[{\includegraphics[width=1in,height=1.25in,clip,keepaspectratio]{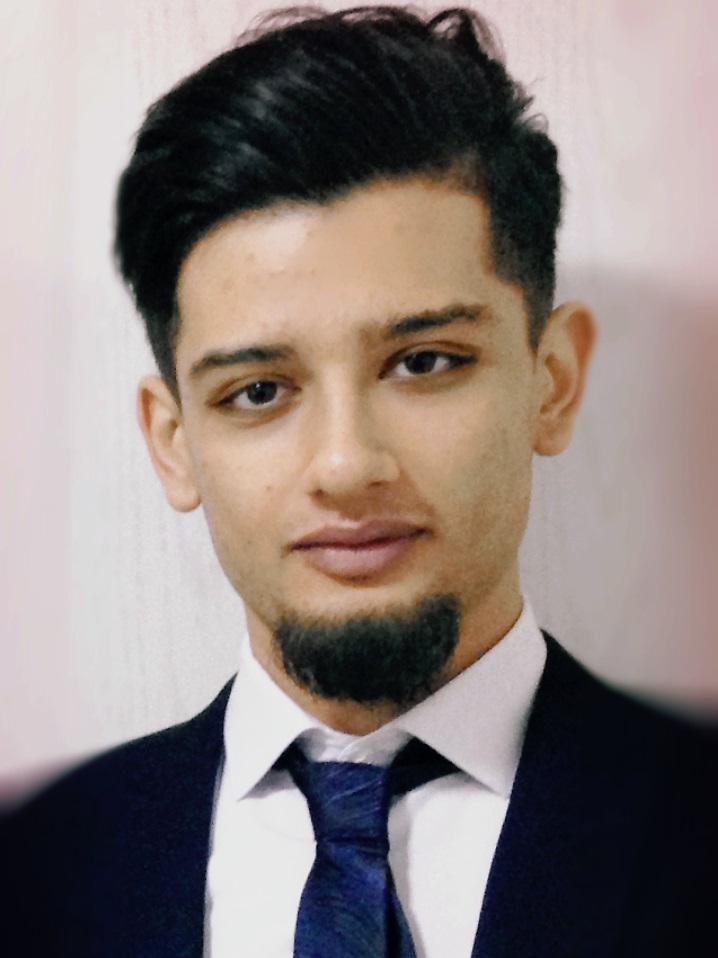}}]{Aria Nouri} (Graduate Student Member, IEEE) \text{received} his M.Sc.\ degree in electrical engineering, communications, from Shahid-Beheshti University, Tehran, Iran, in 2021. He has been with the cognitive telecommunication research group at Shahid-Beheshti University as a research assistant since 2017. His research interests lie in the areas of coding and information theory, focusing on secure communication, semantic communication, quantum error correction,\, and\, fault-tolerant\, quantum\, computing.
\end{IEEEbiography}
\begin{IEEEbiography}[{\includegraphics[width=1in,height=1.25in,clip,keepaspectratio]{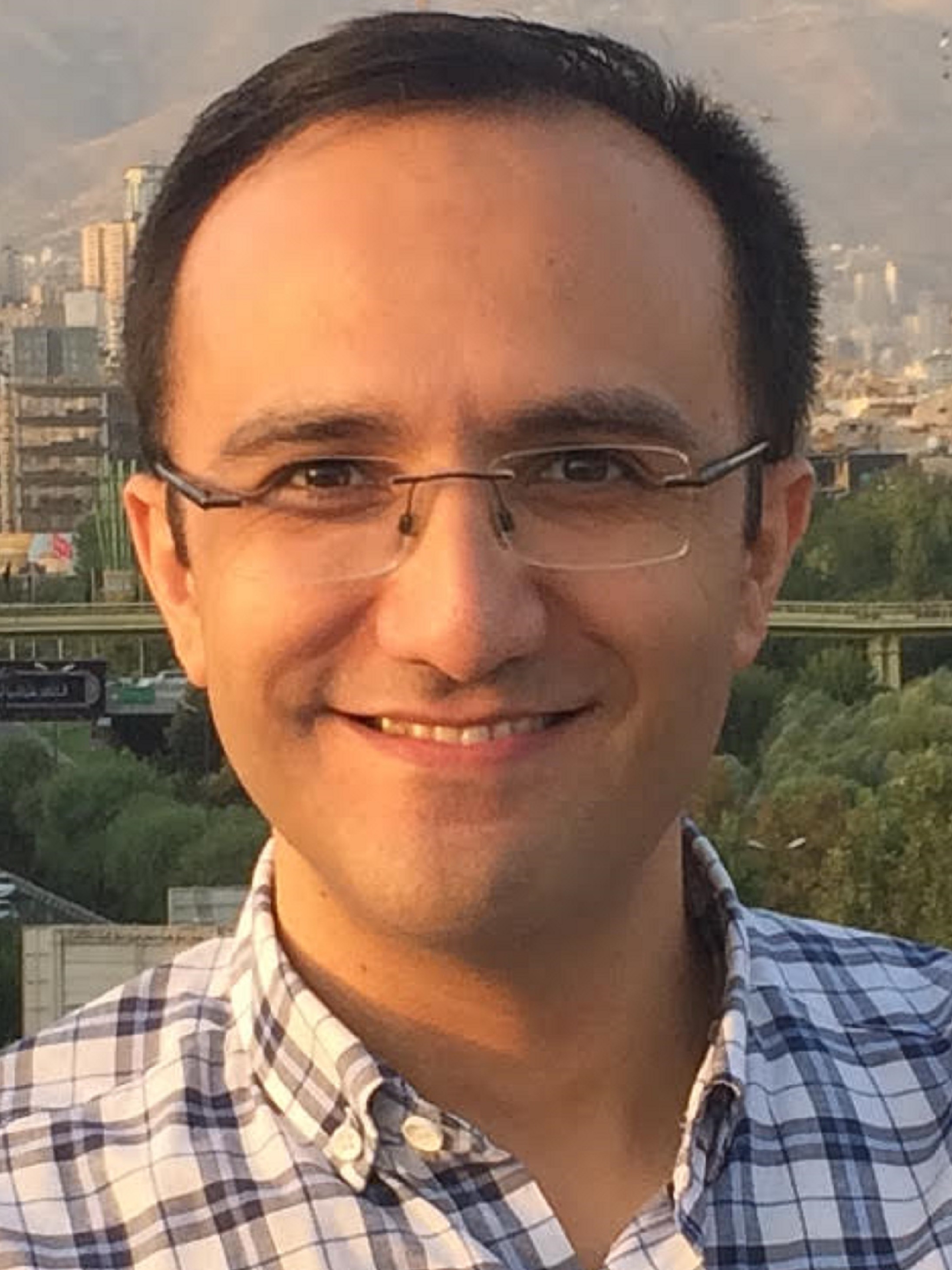}}]{Reza Asvadi}
	(Senior Member, IEEE) received the B.Sc. degree (with highest honors) in electrical engineering from K.\ N.\ Toosi University of Technology, Tehran, Iran, in 2001, the M.Sc. degree in electrical engineering from Sharif University of Technology, Tehran, Iran, in 2003, and the Ph.D. degree from K.\ N.\ Toosi University of Technology, in 2011.
	
	From 2004 to 2006, he was a Lecturer with Army Airforce University, Tehran, to fulfill his national service. He was a Post-Doctoral Researcher with the University of Oulu, Oulu, Finland, from 2012 to 2014. During the postdoc, he participated in many Academy of Finland and European Union (FP7) projects investigating iterative algorithms and information-theoretical bounds over new emerging wireless networks. He is currently an Assistant Professor with Shahid Beheshti University, Tehran, since 2016. His research interests include coding and information theory and signal processing for wireless communications.
	
	Dr. Asvadi was a recipient of several post-doctoral research grants, including the University of Alberta (2011–2012) and Carleton University (2014–2016) Postdoctoral Fellowships.
\end{IEEEbiography}

\begin{IEEEbiography}[{\includegraphics[width=1in,height=1.25in,clip,keepaspectratio]{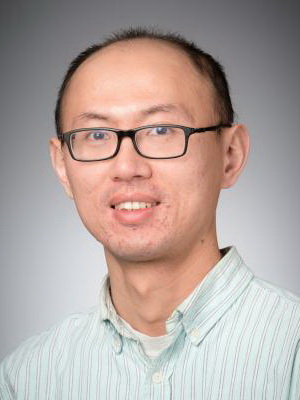}}]{Jun Chen}
	(Senior Member, IEEE) received the B.E. degree in communication engineering from Shanghai Jiao Tong University, Shanghai, China, in 2001, and the M.S. and Ph.D. degrees in electrical and computer engineering from Cornell University, Ithaca, NY, USA, in 2004 and 2006, respectively.
	
	From September 2005 to July 2006, he was a Post-Doctoral Research Associate with the Coordinated Science Laboratory, University of Illinois at Urbana–Champaign, Urbana, IL, USA, and a Post-Doctoral Fellow with the IBM Thomas J. Watson Research Center, Yorktown Heights, NY, USA, from July 2006 to August 2007. Since September 2007, he has been with the Department of Electrical and Computer Engineering, McMaster University, Hamilton, ON, Canada, where he is currently a Professor. His research interests include information theory, machine learning, wireless communications, and signal processing.
	
	Dr. Chen was a recipient of the Josef Raviv Memorial Postdoctoral Fellowship in 2006, the Early Researcher Award from the Province of Ontario in 2010, the IBM Faculty Award in 2010, the ICC Best Paper Award in 2020, and the JSPS Invitational Fellowship in 2021. He held the title of the Barber-Gennum Chair of information technology from 2008 to 2013 and the title of the Joseph Ip Distinguished Engineering Fellow from 2016 to 2018. He served as an Editor for {\sc IEEE Transactions on Green Communications and Networking} from 2020 to 2021. He is currently an Associate Editor of {\sc IEEE Transactions on Information Theory}.
\end{IEEEbiography}

\newpage

\begin{IEEEbiography}[{\includegraphics[width=1in,height=1.25in,clip,keepaspectratio]{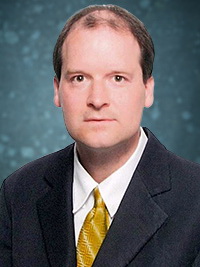}}]{Pascal O.~Vontobel}
	(Fellow, IEEE) received the Diploma degree in electrical engineering in
	1997, the Post-Diploma degree in information techniques in 2002, and the
	Ph.D.\ degree in electrical engineering in 2003, all from ETH Zurich,
	Switzerland.
	
	From 1997 to 2002 he was a research and teaching assistant at the Signal and
	Information Processing Laboratory at ETH Zurich, from 2006 to 2013 he was a
	research scientist with the Information Theory Research Group at
	Hewlett--Packard Laboratories in Palo Alto, CA, USA, and since 2014 he has
	been an Associate Professor at the Department of Information Engineering at
	the Chinese University of Hong Kong. Besides this, he was a postdoctoral
	research associate at the University of Illinois at Urbana--Champaign
	(2002--2004), a visiting assistant professor at the University of
	Wisconsin--Madison (2004--2005), a postdoctoral research associate at the
	Massachusetts Institute of Technology (2006), and a visiting scholar at
	Stanford University (2014). His research interests lie in coding and
	information theory, quantum information processing, data science,
	communications, and signal processing.
	
	Dr.\ Vontobel was an Associate Editor for the {\sc IEEE Transactions on
	Information Theory} (2009--2012), an Awards Committee Member of the IEEE
	Information Theory Society (2013--2014), a Distinguished Lecturer of the IEEE
	Information Theory Society (2014--2015), and an Associate Editor for the {\sc
	IEEE Transactions on Communications} (2014--2017). Moreover, he was a TPC
	co-chair of the 2016 IEEE International Symposium on Information Theory, the
	2018 IEICE International Symposium on Information Theory and its Applications,
	and the 2018 IEEE Information Theory Workshop, along with being the director
	of the 2021 Croucher Summer Course in Information Theory, co-organized
	several topical workshops, and was on the technical program committees of
	many international conferences. Furthermore, he was multiple times a plenary
	speaker at international information and coding theory conferences, he
	received an exemplary reviewer award from the IEEE Communications Society, and
	was awarded the ETH medal for his Ph.D.\ dissertation. He is an IEEE Fellow.
\end{IEEEbiography}
\end{document}